\numberwithin{equation}{section}
\newtheorem{theorem}{Theorem}[section]
\newtheorem*{theorem*}{Theorem}
\newtheorem{proposition}[theorem]{Proposition}
\newtheorem{lemma}[theorem]{Lemma}
\newtheorem{corollary}[theorem]{Corollary}
\theoremstyle{definition}
\newtheorem{remark}[theorem]{Remark}
\newtheorem{example}[theorem]{Example}
\newtheorem{definition}[theorem]{Definition}
\newcommand{\V}{\mathcal{V}}
\newcommand{\R}{\mathbb{R}}
\newcommand{\Z}{\mathbb{Z}}
\newcommand{\K}{\mathbb{K}}
\newcommand{\C}{\mathbb{C}}
\newcommand{\X}{X}
\newcommand{\Y}{Y}
\newcommand{\hatM}{\widehat{\M}}
\renewcommand{\O}{\operatorname{O}}
\newcommand{\br}[1]{{\left({#1}\right)}} 
\newcommand{\B}{\mathcal{B}}
\newcommand{\M}{\mathcal{M}}
\newcommand{\A}{\mathcal{A}}
\newcommand{\cM}{\mathcal{M}}
\newcommand{\rhog}{{d_{\textup{Gram}}}}
\newcommand{\rhos}{{d_{\sigma}}}
\newcommand{\rhoh}{{d_{\textup{H}}}}
\newcommand{\of}[1]{{\left({#1}\right)}} 
\newcommand{\norm}[1]{{\left\lVert{#1}\right\rVert}} 
\newcommand{\diag}{{\textup{diag}}}
\newcommand{\eqxspace}{{\hspace{0.0pt}}} 
\newcommand{\eqx}[1][]{{\eqxspace\overset{{\textup{#1}}}{=}\eqxspace}}
\newcommand{\leqx}[1][]{{\eqxspace\overset{{\textup{#1}}}{\leq}\eqxspace}}
\newcommand{\geqx}[1][]{{\eqxspace\overset{{\textup{#1}}}{\geq}\eqxspace}}
\DeclareMathOperator{\Aff}{Aff}
\DeclareMathOperator{\GL}{GL}
\DeclareMathOperator{\SO}{SO}
\newcommand{\rev}[1]{{\color{black}  #1}}
\newcommand{\twopartdef}[4]
{
	\left\{
		\begin{array}{ll}
			#1 & \mbox{if } #2 \\
			#3 & \mbox{if } #4
		\end{array}
	\right.
}
\begin{document}
	
	\title{The stability of generalized phase retrieval problem over compact groups}
	
	\author{Tal Amir, Tamir Bendory, Nadav Dym, and Dan Edidin}
	
    \date{}
	
	\maketitle

	\begin{abstract} 
The generalized phase retrieval problem over compact groups aims to recover a set of matrices—representing an unknown signal—from their associated Gram matrices. This framework generalizes the classical phase retrieval problem, which reconstructs a signal from the magnitudes of its Fourier transform, to a richer setting involving non-abelian compact groups. In this broader context, the unknown phases in Fourier space are replaced by unknown orthogonal matrices that arise from the action of a compact group on a finite-dimensional vector space.
This problem is primarily motivated by advances in electron microscopy to determining the 3D structure of biological macromolecules from highly noisy observations. To capture realistic assumptions from machine learning and signal processing, we model the signal as belonging to one of several broad structural families: a generic linear subspace, a sparse representation in a generic basis, the output of a generic ReLU neural network, or a generic low-dimensional manifold.
Our main result shows that, \rev{for a prior of sufficiently low dimension}, the generalized phase retrieval problem not only admits a unique solution (up to inherent group symmetries), but also satisfies a bi-Lipschitz property. This implies robustness to both noise and model mismatch---an essential requirement for practical use, especially when measurements are severely corrupted by noise.
These findings provide theoretical support for a wide class of scientific problems under modern structural assumptions, and they offer strong foundations for developing robust algorithms in high-noise regimes.

\end{abstract}

\section{Introduction}

\textbf{Phase retrieval.}
Phase retrieval is the problem of reconstructing a signal from the magnitudes of its Fourier transform. Since phase information is lost in Fourier space, accurate signal recovery requires incorporating additional prior knowledge about the signal, modeled by an appropriate ``signal space" $\Omega$.  
For a discrete signal \( x \), let \( X:= (X_1, \ldots, X_L) \) denote its Fourier coefficients,
and define the Fourier magnitudes \rev{(power spectrum)} as \( X^* X:= (X_1^* X_1, \ldots, X_L^* X_L) \). The phase retrieval problem can then be formulated as the task of finding a signal whose Fourier coefficients \( Z \) are consistent with the observed magnitudes and lie in the  signal space  \( \Omega \), i.e.,  
\begin{equation} \label{eq:pr}  
	\text{find } Z \in \Omega \quad \text{subject to} \quad Z^* Z \approx X^* X.
\end{equation}

The signal space \( \Omega \) depends on the specific application. The phase retrieval problem traces its origins to the early 20th century, emerging from advancements in X-ray crystallography, a fundamental technique for determining molecular structures that has driven major scientific advancements. In this context, \( \Omega \) consists of sparse signals, where the nonzero entries indicate the locations of atoms~\cite{elser2018benchmark}.  
In coherence diffraction imaging, the signal \( x \) is typically limited to a known support~\cite{shechtman2015phase,barnett2022geometry}.   
For more relevant applications, we refer the reader to recent surveys on the mathematical and numerical perspectives of the phase retrieval problem, along with the references therein~\cite{bendory2017fourier, grohs2020phase, fannjiang2020numerics}.  

\textbf{Multi-reference alignment.}
We now present an alternative perspective on the phase retrieval problem by leveraging the multi-reference alignment (MRA) model, a mathematical framework that encapsulates a broad class of estimation problems characterized by inherent group structures~\cite{bandeira2014multireference,bandeira2023estimation,perry2019sample,bendory2017bispectrum}. 
Let \( G \) be a compact group acting on a finite-dimensional, real vector space \( V \). Each MRA observation \( y \) is modeled as:  
\begin{eqnarray} \label{eq:mra}  
	y = g \cdot x + \varepsilon,  
\end{eqnarray}  
where \( g \in G \), \( \varepsilon\) is noise (or error) term independent of \( g \), \rev{the symbol} \( \cdot \) denotes the group action, and \( x \in V \) is the signal of interest.  We assume that \( g \) is uniformly distributed over \( G \) with respect to the Haar measure. The objective in the MRA model is to estimate the signal \( x \in V \) from \( n \) realizations:  
\begin{equation}  
	y_i = g_i \cdot x + \varepsilon_i, \quad i = 1, \ldots, n.  
\end{equation}

Recall that a general finite-dimensional representation of a compact group
$G$ can be decomposed as 
\begin{equation} \label{eq:V}
	V = \oplus_{\ell = 1}^L V_\ell^{\oplus R_\ell},   
\end{equation}
\rev{where} the $V_\ell$ are
non-isomorphic
irreducible representations of $G$ of dimension $N_\ell$. 
This means that an element $x \in V$ has a decomposition $x = \sum_{\ell = 1}^L \sum_{i= 1}^{R_\ell} x_\ell[i]$, 
where $x_\ell[i]$ 
is in the $i$-th copy of the irreducible representation $V_\ell$, and the $G$-action preserves this decomposition.
Conveniently, once a basis for each irreducible representation is fixed, an element of $V$ can be represented by an $L$-tuple
\begin{equation} \label{eq:signal}
	X:= (X_1, \ldots ,X_L)\in\R^{N_1 \times R_1}\times \cdots\times \R^{N_L \times R_L},
\end{equation}
where $X_\ell:=\{x_\ell[i]\}_{i=1}^{R_\ell}$ is an 
$N_\ell \times R_\ell$ matrix. In the sequel, we refer interchangeably to both $x\in V$ and the tuple~\eqref{eq:signal} as the signal.

\textbf{Second moment analysis.} Let us focus on the second moment of the MRA model. 
Based on the law of large numbers, the population second moment  can be approximated using the empirical second moment (the noise term is omitted for brevity):
\begin{equation}  
	\frac{1}{n} \sum_{i=1}^n y_i y_i^\top\approx\mathbb{E}[yy^\top] = \int_G (g \cdot x)(g \cdot x)^\top \, dg,  
\end{equation}
where the approximation is accurate when $n=\omega(\sigma^4)$, \rev{with $\sigma^2$ denoting the variance of the noise.}
In~\cite{bendory2024sample}, it was shown that the second moment of a  vector $x\in V$ is given by the $L$-tuple of the $R_\ell\times R_\ell$ symmetric matrices
\begin{equation}\label{eq:Gram}
	X^\top X:=(X_1^\top X_1,\ldots,X_L^\top X_L)\in\R^{R_1 \times R_1}\times \cdots\times \R^{R_L \times R_L}.    
\end{equation}
In particular, when $G=\Z_N$, the group of circular translations acting on $V=\R^N$, then the second moment results in the $L$-tuple~\eqref{eq:Gram}, where $X_\ell$ is the $\ell$-th Fourier coefficient, expanded in the real Fourier basis (of sines and cosines)~\cite{edidin2023generic} and $L = \left\lfloor \frac{N}{2} \right\rfloor +1$.
This is exactly the power spectrum mentioned above, and thus, the classical phase retrieval problem is a special case of the second moment of the MRA model \rev{in which} the irreducible representations are one- or two-dimensional and  multiplicity-free~\cite{bendory2023phase}.

\textbf{The generalized phase retrieval problem.} Recasting the phase retrieval problem from the perspective of second moments yields a natural generalization in which the cyclic group $\mathbb{Z}_L$ is replaced by an arbitrary compact group $G$ acting linearly on a vector space~$V$.
We refer to this problem as the 
\emph{generalized phase retrieval problem over compact  groups}~\cite{bendory2025generalized}.

\begin{definition}[The generalized phase retrieval problem over compact groups]
	Recover a signal $X$ of the form~\eqref{eq:signal} from the tuple of Gram matrices $X^\top X$~\eqref{eq:Gram}.
\end{definition}

Special cases of the generalized phase retrieval problem include the standard phase retrieval problem, and the principal motivation for this work: Determining the spatial structure of biological molecules using single-particle cryo-electron microscopy (cryo-EM). Section~\ref{sec:cryoEM} discusses this application in more detail.

The tuple of Gram matrices determines the signal only up to a set of orthogonal matrices, 
and this ambiguity can be alleviated using the fact that in many practical scenarios, prior structural information about the signal is available. In a previous
paper~\cite{bendory2024transversality}, we study a class of priors called {\em semi-algebraic sets}
which includes many priors commonly encountered in scientific and engineering applications. The main result
of~\cite{bendory2024transversality} gives conditions on these priors, ensuring that a signal is uniquely determined by its second moment (possibly up to \rev{a} global sign). We call this property \emph{transversality}. 

Transversality alone is insufficient for practical applications. In real-world scenarios, it is essential to guarantee stability---that is, small errors in the observations, which are inevitable in practice, should only cause small errors in signal recovery. To analyze stability, we adopt the powerful notion of bi-Lipschitz continuity. The goal of this paper is to show that on many typical real-world priors, the second moment is both transverse and
stable.
The main result of this paper is to extend the uniqueness results of~\cite{bendory2024transversality} and show that, under similar conditions, the generalized phase retrieval problem is in fact bi-Lipschitz.
This, in turn, implies that the map between the tuple of Gram matrices~\eqref{eq:Gram} and the signal~\eqref{eq:signal} is robust to noise and errors. The following theorem summarizes informally the main results of the paper, while the  
main theorems are given in Section~\ref{sec:Bi-Lipschitz-results}.  

\begin{theorem}[Main result, informal]
	Suppose that \(\cM\) is one of the following: (i) a generic linear space, (ii) a generic sparse prior, (iii) the image of a generic ReLU deep neural network~\eqref{eq:relu} or (iv) a compact manifold\rev{, each} of dimension \(M\). Then, for some constant \(C \leq 4\), if \(C M<K\), the mapping from the tuple of matrices in \eqref{eq:signal} to the corresponding Gram matrices in \eqref{eq:Gram} is bi-Lipschitz.
\end{theorem}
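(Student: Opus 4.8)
The plan is to establish the two halves of the bi-Lipschitz estimate separately. Write $f$ for the measurement map $X\mapsto X^\top X$ of \eqref{eq:Gram}, let $\mathrm{O}:=\prod_{\ell=1}^{L}\mathrm{O}(N_\ell)$ be the orthogonal ambiguity group acting blockwise on signals \eqref{eq:signal}, and let $\rho(X,Y):=\min_{Q\in\mathrm{O}}\norm{X-Q\cdot Y}$ be the induced quotient distance (in the cases where \cite{bendory2024transversality} only gives uniqueness up to a global sign one quotients further by $\pm 1$; this is harmless throughout). The goal is to produce constants $0<c\le C'$ with $c\,\rho(X,Y)\le\norm{f(X)-f(Y)}\le C'\rho(X,Y)$ for all $X,Y\in\cM$. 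The upper bound is routine: $f$ is $\mathrm{O}$-invariant and is a homogeneous quadratic polynomial map, hence Lipschitz on bounded sets. In cases (i)--(iii), $\cM$ is a cone---a linear space, a finite union of subspaces, or, since a ReLU network is piecewise linear, a finite union of images of polyhedral cones under linear maps---so by degree-two homogeneity it suffices to work on the compact slice $\cM_0:=\cM\cap S$ with $S$ the unit sphere; in case (iv), $\cM_0:=\cM$ is already compact. Lipschitzness of $f$ near $\cM_0$ together with $\mathrm{O}$-invariance gives $\norm{f(X)-f(Y)}=\min_Q\norm{f(X)-f(Q\cdot Y)}\le C'\min_Q\norm{X-Q\cdot Y}=C'\rho(X,Y)$, first on $\cM_0$ and then, by homogeneity, on all of $\cM$.

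For the lower bound I argue by contradiction on the compact set $\cM_0$, which in all four families is a compact semialgebraic set that is a finite union of smooth pieces. Suppose no $c>0$ works: there exist $X_n,Y_n\in\cM_0$ with $\norm{f(X_n)-f(Y_n)}/\rho(X_n,Y_n)\to 0$. Passing to subsequences, $X_n\to X_\star$, $Y_n\to Y_\star$ in $\cM_0$; choosing $Q_n\in\mathrm{O}$ with $\rho(X_n,Y_n)=\norm{X_n-Q_n\cdot Y_n}$ and using compactness of $\mathrm{O}$, also $Q_n\to Q_\star$. If $\rho(X_\star,Y_\star)>0$, continuity forces $f(X_\star)=f(Y_\star)$ with $X_\star\not\sim Y_\star$, contradicting injectivity of $f$ on $\cM$ modulo $\mathrm{O}$; this is exactly the transversality theorem of \cite{bendory2024transversality}, which is in force since $K>CM$. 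In the remaining case $\rho(X_\star,Y_\star)=0$, set $Z_n:=Q_n\cdot Y_n\to X_\star$ and $\Delta_n:=X_n-Z_n\to 0$ with $\norm{\Delta_n}=\rho(X_n,Y_n)$. Since $f$ is quadratic, $f(X_n)-f(Y_n)=f(X_n)-f(Z_n)=df_{Z_n}[\Delta_n]+\Delta_n^\top\Delta_n$ and $df_{Z_n}\to df_{X_\star}$, so $\norm{f(X_n)-f(Y_n)}=\norm{df_{X_\star}[\Delta_n]}+o(\norm{\Delta_n})$. Optimality of $Q_n$ forces $\Delta_n$ to be asymptotically orthogonal to the orbit tangent $T_{X_\star}\mathcal{O}_{X_\star}=\ker df_{X_\star}$, while $X_n,Z_n$ lying near $\cM_0$ and its $\mathrm{O}$-translate forces $\Delta_n$ to lie asymptotically in $T_{X_\star}\cM_0+T_{X_\star}\mathcal{O}_{X_\star}$. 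Thus the contradiction follows once we have the \emph{infinitesimal injectivity} bound: there is $\sigma>0$, uniform over $X_\star\in\cM_0$, with $\norm{df_{X_\star}[v]}\ge\sigma\norm{v}$ for every $v\in(T_{X_\star}\cM_0+T_{X_\star}\mathcal{O}_{X_\star})\cap(T_{X_\star}\mathcal{O}_{X_\star})^{\perp}$.

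Establishing this bound is where the work---and the hypothesis $K>CM$---really lies. When every block $X_{\star,\ell}$ has full rank, $\ker df_{X_\star}$ is exactly the orbit tangent, $df_{X_\star}$ is automatically injective on its orthocomplement with positive least singular value, and this value varies continuously with $X_\star$, hence is bounded below on the compact $\cM_0$. The danger is the rank-deficient strata, where the ambiguity action degenerates, $\ker df_{X_\star}$ strictly contains the orbit tangent, and the least positive singular value of $df_{X_\star}$ can collapse---precisely the mechanism by which bi-Lipschitzness can fail while injectivity survives. The plan is to show that for a \emph{generic} choice of prior in each family (generic subspace, generic sparsifying basis, generic network weights, generic embedding), subject to $K>CM$, the prior either misses these strata or meets them in a manner that keeps $\sigma$ bounded below; this is proved by a dimension-count and parametric-transversality argument in the spirit of \cite{bendory2024transversality}, now applied to the incidence set of pairs $(X,v)$ with $X\in\cM$, $v\in T_X\cM$, $v\notin T_X\mathcal{O}_X$, and $df_X[v]=0$. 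The constant $C$ (no larger than $4$) absorbs both the point-level transversality of \cite{bendory2024transversality} and this tangent-level transversality, which must hold simultaneously.

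The main obstacle is therefore the uniform control of $df_X$ near rank-deficient configurations, where the fibers of $f$ become singular and the quotient $\cM/\mathrm{O}$ loses its manifold structure; ruling out the collapse of $\sigma$ there for generic $\cM$ is the technical heart of the proof. A secondary, organizational difficulty is funneling the non-compact, non-smooth sparse (union of subspaces) and ReLU (piecewise-linear image) models through the same compactness argument, which is handled by treating each linear or polyhedral piece separately, combining the per-piece estimates, and invoking global injectivity to rule out collapse between distinct pieces.
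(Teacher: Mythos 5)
There is a genuine gap, and it starts with the object you are bounding. The theorem, as formalized in Definition~\ref{def:biLipschitz}, concerns the square-root map $X\mapsto\sqrt{X^TX}$ measured against $\rhos(X,Y)=\min\{\norm{X-Y},\norm{X+Y}\}$ (quotient by sign only), whereas you work with the quadratic map $f(X)=X^TX$ against the full-group quotient $\rho(X,Y)=\min_{Q\in H}\norm{X-Q\cdot Y}=\rhoh(X,Y)$. This has two fatal consequences. First, with $\rho=\rhoh$ your ``injectivity of $f$ on $\cM$ modulo the orthogonal group'' is automatic on all of $V$ (equal Gram matrices always means same $H$-orbit), so the transversality hypothesis does no work in that branch and the statement you are actually targeting is a prior-free, Derksen-type bound; the entire content of the paper is the inequality $C\rhos\leq\rhoh$ on $\M$ (see \eqref{eq:nontang}), i.e., that the prior stably resolves the orthogonal ambiguity down to a sign, and your framework never produces that comparison. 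Second, a degree-two map cannot be bi-Lipschitz against a degree-one metric on a cone: your ``prove it on the unit-sphere slice and extend by homogeneity'' step fails because $\norm{f(tX)-f(tY)}=t^2\norm{f(X)-f(Y)}$ while the metric scales like $t$; the paper passes to the matrix square root precisely to fix this, and then still needs Lemma~\ref{thm_rhog_local_lowlip} together with Derksen's bound \eqref{eq.derksenbound} to translate between $\norm{X^TX-Y^TY}$ and $\norm{\sqrt{X^TX}-\sqrt{Y^TY}}$ (the square root is only H\"older-$1/2$ near rank-deficient points), a conversion your outline never addresses.

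Beyond the framing, the heart of your lower bound is asserted rather than proved, and the mechanism you propose is not the one that works. The uniform ``infinitesimal injectivity'' constant $\sigma$ over $\cM_0$, to be obtained by a parametric transversality count over the incidence set of pairs $(X,v)$ with $df_X[v]=0$, is exactly the hard step and is left as a plan; the transversality theorem of \cite{bendory2024transversality} is a set-theoretic statement about orbit intersections and does not by itself give lower bounds on singular values of $df$ along rank-deficient strata. Intermediate claims also need proof: optimality of $Q_n$ giving asymptotic orthogonality of $\Delta_n$ to the orbit tangent, $\Delta_n$ lying asymptotically in $T_{X_\star}\cM_0+T_{X_\star}\mathcal{O}_{X_\star}$ (the paper's Lemma~\ref{lem:tanprojection} does the analogous work in the manifold case), and $T_{X_\star}\mathcal{O}_{X_\star}=\ker df_{X_\star}$, which fails off the full-rank locus. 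Most importantly, for the piecewise-linear priors (sparsity, ReLU) your plan of per-piece estimates plus ``global injectivity to rule out collapse between distinct pieces'' misses the actual failure mode: when $X_k$ and $Y_k$ lie in different affine pieces and converge to a \emph{common} limit point, injectivity gives nothing. The correct hypothesis, per Theorem~\ref{thm:meta}, is transversality of the enlarged set $\hatM=\cup_{i,j}(\A_i+\V_j)$ (or of every embedded tangent space in the manifold case, Theorem~\ref{thm.manifold}); this enlarged set has dimension up to $2M$, which is exactly where $K>4M$ comes from, and Example~\ref{ex:line_segment} shows transversality of $\M$ alone is insufficient. The paper also sidesteps your uniformity problem entirely: it argues by contradiction, localizes at a single limit $X_0$, and gets positivity of $c=\min_{\norm{S}=1}\norm{X_0^TS+S^TX_0}$ from Proposition~\ref{prop.ATB} and transversality of $\hatM$. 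Finally, the compact-or-homogeneous hypothesis is not an organizational convenience; Appendix~\ref{sec:compactness} shows it is necessary.
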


Here, the constant $K$ is the effective dimension of the representation under the action of the ambiguity group $H = \prod_{\ell =1}^L \O(N_\ell)$. For details, see Section~\ref{sec:transversality}.

\paragraph{Related literature.}
In a series of fundamental papers~\cite{bandeira2014saving, balan2015invertibility, balan2015lipschitz, balan2016lipschitz, balan2016reconstruction},  \rev{Bandeira, Cahill, Mixon and Nelson as well as} Balan, Wang and Zou 
proved that the frame phase retrieval problem is robust. 
In particular, if $\{f_1, \ldots, f_L\}$ is an $L$-element frame in $\K =\R^M$ or $\K=\C^M$
and if the phaseless measurement map 
$$  \K^M \to \R^L_{\geq 0};\,  x \mapsto (|\langle x,f_1 \rangle|, \ldots , |\langle x, f_L \rangle|),$$
is injective (modulo global phase), then it is also bi-Lipschitz. These results can be interpreted in our framework as stating
that if $X = (X_1, \ldots , X_L)$ is an $L$-tuple of $1 \times 1$ matrices and $\M \subset \R^L$ is the range of the analysis operator of the frame, then the map $$\M \to \R^L\, ; \, X \mapsto \sqrt{X^TX},$$ is bi-Lipschitz whenever
it is transverse to the orbits of $H = \O(1)^L = \{\pm 1\}^L$. In particular, Corollary~\ref{thm.linear_bilipschitz} is a natural extension of these works to Gram matrices, and also generalizes and extends a recent result of Derksen~\cite{derksen2024bi}, who proved
that the map $X \mapsto \sqrt{X^TX}$ induces a bi-Lipschitz map $\R^{n \times d}/\O(n) \to \R^{d \times d}$. 

The second moment we consider is an invariant for the group $H= \prod_{i=1}^L \O(N_L)$. There has been extensive recent
work by several authors proving bi-Lipschitz bounds for piecewise-linear group invariant functions for finite
groups. This includes max-filter~\cite{cahill2024towards,cahill2022group,qaddura2025max, mixon2023max} and the method of co-orbits~\cite{balan2023ginvariant, balan2022permutation, balan2023relationships, balan2024stability}, and other bi-Lipschitz constructions for permutation groups \cite{davidson2025on,sverdlov2024fsw,amir2025fourier,dym2025bi}. Notably, all these constructions discuss powerful invariants that determine the signal uniquely up to symmetry in the whole linear 'ambient' signal space. In this case, stability is relatively straightforward. In contrast, we focus here on simple invariants, \rev{namely the second moments}, where unique recovery and stability rely on more complex assumptions of non-linear priors.

\paragraph{Organization of the paper.}
The next section introduces the main background required for this work, including basic definitions and the statement of the transversality theorem proved in~\cite{bendory2024transversality}. 
Our main theoretical results on stability are presented in Section~\ref{sec:Bi-Lipschitz-results}, where we also explore their implications for several fundamental priors commonly used in machine learning and signal processing. The proof of the main theorem is provided in Section~\ref{sec:proof_meta_thm}.
We then introduce the computational aspects of cryo-EM and discuss how our results apply to this setting in Section~\ref{sec:cryoEM}. Finally, in Section~\ref{sec:outlook}, we outline several important open questions that remain in this field.

\section{The transversality theorem and bi-Lipshitz analysis for the second moment}

\subsection{Semi-algebraic sets} For a unique solution to the generalized phase retrieval problem, a structural assumption on the signal is necessary. Without such an assumption, the tuple of Gram matrices determines the signal only up to a set of orthogonal transformations. A broad and widely applicable assumption in engineering and scientific contexts is that the signal belongs to a low-dimensional semi-algebraic set, referred to as a \emph{semi-algebraic prior} on the signal.

A semi-algebraic set $\M \subset \R^N$ is a finite union of sets, which are defined by polynomial equality and inequality.
This work is motivated by three important special cases of semi-algebraic sets:

\begin{itemize}
	\item \emph{Linear priors.} The assumption that the signal lies in a low-dimensional subspace. Linear priors are ubiquitous in signal processing and machine learning and are the root of popular methods, such as Principal Component Analysis (PCA)\rev{, e.g.,~\cite{jolliffe2011principal,candes2011robust,castells2007principal}}. 
	
	\item \emph{Sparse priors.} The assumption that many signals can be approximated by only a few coefficients under some basis or frame \rev{, e.g., ~\cite{elad2010sparse,donoho2006compressed,zou2006sparse}}.
	
	\item \emph{ReLU neural networks.} These are based on neural networks of the form
	\begin{equation} \label{eq:relu}
		x=A_\ell\circ \eta\circ A_{\ell-1}\circ \ldots \circ \eta \circ A_1(z),
	\end{equation}
	where $z$ resides in a low-dimensional latent space, the $A_i$'s are affine transformations and  $\eta(u)=\max(u,0)$ is the element-wise rectification function (ReLU)\rev{~\cite{goodfellow2016deep}}. 
\end{itemize}

Another widely accepted premise is the manifold assumption, which posits that data often lies approximately on a low-dimensional manifold\rev{; see for example~\cite{belkin2003laplacian,coifman2006diffusion}}. This prior will also be addressed in the present work.


\subsection{The transversality theorem}
\label{sec:transversality} 

Before addressing the stability of the generalized phase retrieval problem, we first need to examine the conditions under which it admits a unique solution. This leads us to the notion of transversality.

\begin{definition}[Transversality] \label{def:transversality}
	Let $V$ a representation of a compact group $H$. We say that a subset $\M \subset V$ is \emph{transverse to the orbits 
		of $H$} if for all $x \in \M$ and $h \in H$, $h \cdot x \in \M$ if and only if $h\cdot x = \pm x$.
\end{definition}
We note that in the literature~\cite{barnett2022geometry}, a pair of manifolds $\M_1, \M_2$ intersects transversely at a point 
$x$ if their tangent spaces are complementary. Because we consider general semi-algebraic sets, our notion of transverse
intersection is purely set-theoretic. 

In~\cite{bendory2024transversality}, a general transversality theorem was derived for semi-algebraic sets in orthogonal or unitary representations of groups. 
It was shown that with a suitable dimension bound, a generic semi-algebraic set is transverse to the orbits of the group action. This, in turn, implies that if a signal lies in a low-dimensional semi-algebraic set, then it can be recovered uniquely from measurements that separate orbits. 
To present the result, we first need to define the  effective dimension of the representation $K$,
defined by the dimension of the representation minus the maximum dimension of the orbits
\begin{equation} \label{eq:K}
	K = \dim V - k(H),    
\end{equation}
where $k(H)$ is the maximum dimension of an $H$ orbit in $V$. 
In general, $k(H)\leq \dim(H)$ and this bound is tight in some cases. 

\begin{theorem}[The transversality theorem~\cite{bendory2024transversality}, informal] \label{thm:transversality}
	If $\M$ is a generic semi-algebraic set of dimension $M$ with $2M<K$,  
	then $\M$ is transverse to the orbits of $H$.
\end{theorem}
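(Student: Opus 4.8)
The plan is to reduce the failure of transversality to the non\nobreakdash-dominance of a single projection out of an incidence variety fibred over the parameter space of the family from which the ``generic'' set is drawn, and then to run a dimension count in which the orbit\nobreakdash-dimension bound $k(H)$ enters exactly once.

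First I would reformulate the conclusion. Since $H$ acts orthogonally, $\norm{h\cdot x}=\norm{x}$, so $h\cdot x$ can be a scalar multiple of $x$ only when $h\cdot x=\pm x$; consequently $\M$ fails to be transverse to the $H$\nobreakdash-orbits precisely when there exist \emph{linearly independent} $x,y\in\M$ with $y\in H\cdot x$, and the group element $h$ itself can be forgotten — only the orbit equivalence relation matters. Accordingly, set $\Gamma:=\setst{(x,y)\in V\times V}{y\in H\cdot x,\ y\ne\pm x}$; by the definition of $k(H)$ every $H$\nobreakdash-orbit has dimension $\le k(H)$, so $\Gamma$ is semi-algebraic with $\dim\Gamma\le\dim V+k(H)$.

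Next, let $\Theta$ be the parameter space of the family under consideration (a generic linear subspace, a generic sparse model, a generic ReLU network, or a generic manifold), write $\M_\theta$ for the corresponding set, and put
\[
\mathcal{W}:=\setst{(\theta,x,y)\in\Theta\times V\times V}{(x,y)\in\Gamma,\ x\in\M_\theta,\ y\in\M_\theta}.
\]
By the reformulation, the ``bad'' parameters — those $\theta$ for which $\M_\theta$ is not transverse — are exactly the image of the projection $\mathcal{W}\to\Theta$, so it suffices to show this map is not dominant, and since a semi-algebraic map cannot raise dimension it is enough to prove $\dim\mathcal{W}<\dim\Theta$. To bound $\dim\mathcal{W}$, project it instead onto $V\times V$: the image lies in $\Gamma$, of dimension $\le\dim V+k(H)$, and the fibre over $(x,y)\in\Gamma$ is $\setst{\theta}{x\in\M_\theta,\ y\in\M_\theta}$. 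For a sufficiently rich family, containing a prescribed generic point is a codimension\nobreakdash-$(\dim V-M)$ condition on $\theta$ (the universal family $\setst{(\theta,x)}{x\in\M_\theta}$ has dimension $\dim\Theta+M$ and maps onto $V$), and because $x,y$ are linearly independent the two membership conditions are independent, so this fibre has dimension $\le\dim\Theta-2(\dim V-M)$. Combining,
\[
\dim\mathcal{W}\ \le\ \of{\dim V+k(H)}+\of{\dim\Theta-2(\dim V-M)}\ =\ \dim\Theta-\of{\dim V-k(H)}+2M\ =\ \dim\Theta-(K-2M),
\]
which is $<\dim\Theta$ as soon as $K>2M$; hence $\M_\theta$ is transverse for every $\theta$ outside a lower\nobreakdash-dimensional semi-algebraic subset of $\Theta$, i.e.\ for a generic choice.

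The hard part will be making the fibre bound in the last step rigorous and uniform. One must first pin down, for each structural family, a precise meaning of ``generic semi-algebraic set of dimension $M$'', and then verify that the associated parameter family is rich enough that forcing membership of two prescribed linearly independent points genuinely imposes codimension $2(\dim V-M)$ — and, more delicately, that the fibre dimension does not jump upward over the special lower\nobreakdash-dimensional locus $\Gamma$, since such a jump would inflate $\dim\mathcal{W}$. Controlling this requires the standard semi-algebraic toolkit (dimension of images under semi-algebraic maps, upper semicontinuity of fibre dimension, stratification) and probably a case\nobreakdash-by\nobreakdash-case check of the ``rich family'' property in the linear, sparse, ReLU, and manifold settings. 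I would stress that the orthogonality of $H$ is used essentially: it is what forces the two interpolation points to be non\nobreakdash-collinear, and hence what produces the factor $2$, turning a naive $K>M$ into the sharp $K>2M$.
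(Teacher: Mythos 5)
This theorem is not proved in the present paper: it is imported (in informal form) from~\cite{bendory2024transversality}, so the relevant comparison is with the argument given there. Your proposal reconstructs essentially that argument: the reformulation that, because $H$ acts orthogonally, failure of transversality forces two \emph{linearly independent} points of $\M$ into one $H$-orbit (this is exactly where the factor $2$ comes from), the bound $\dim\Gamma\le\dim V+k(H)$ via the maximal orbit dimension, and an incidence-variety dimension count in which the two membership conditions contribute $2(\dim V-M)$, producing precisely the threshold $K>2M$.

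The one genuine gap is the step you yourself flag, and as you have organized the count it is not a mere technicality. You fiber $\mathcal{W}$ over $\Gamma\subset V\times V$ and posit a ``rich family'' axiom giving each fiber codimension $2(\dim V-M)$ in $\Theta$; but you need this bound uniformly over all of $\Gamma$, not just at generic pairs, and the theorem as stated in the paper concerns a specific kind of family, namely translates $A\cdot\M'$ of a fixed semi-algebraic set by $A$ ranging over $\GL(V)$, $\Aff(V)$ or $\O(V)$ (see the remark after the theorem). The standard way to get uniformity for free is to turn the count around: parametrize the incidence set by $(A,u,v)\in\Theta\times\M'\times\M'$ with $u,v$ linearly independent and bound the fibers of $A\mapsto(Au,Av)$, which all have dimension $\dim\Theta-2\dim V$ because $\GL(V)$ (resp.\ $\Aff(V)$) acts transitively on pairs of linearly independent vectors; then the preimage of $\Gamma$ has dimension at most $\dim\Theta-2\dim V+\dim V+k(H)$, and adding $2M$ for $(u,v)$ gives $\dim\Theta-(K-2M)$ with no genericity assumption on fibers. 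Note also that your blanket codimension claim is false for the orthogonal family: $\O(V)$ is not transitive on pairs of independent vectors (norms and inner products are preserved), which is exactly why the orthogonal versions carry corrected thresholds (e.g.\ $4M+2$ rather than $4M$ in Corollary~\ref{cor:sparsity}). So the case-by-case verification you defer is where the real content, and the small differences in the constants, actually live.
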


\begin{remark}
	A generic semi-algebraic set is obtained by choosing any semi-algebraic set $\M'$ and applying a generic transformation $A$ to it to obtain $\M=A\cdot \M'$. Results in the spirit of \eqref{thm:transversality} were obtained for linear, affine and orthogonal transformations with minor differences. 
\end{remark}

\subsection{Transversality is not enough for stable phase retrieval}
If $V$ is a real representation of a compact group $G$  of the form~\eqref{eq:V}, 
the maximal orbit dimension of the group $H = \prod_{\ell =1}^L \operatorname{O}(N_\ell)$ is
\begin{equation} \label{eq:k_signal_recovery}
	k(H)  = \sum_{\ell =1}^L k_\ell, \quad     k_\ell = \dim \operatorname{O}(N_\ell) - \dim \O(N_\ell - R_\ell).
\end{equation}
\rev{More explicitly, $k_\ell$ is given by 
$$
k_\ell=\twopartdef{R_\ell(N_\ell-R_\ell/2-1/2)}{N_\ell\geq R_\ell,}{\frac{N_\ell^2-N_\ell}{2}}{N_\ell < R_\ell.}$$
}
Then, 
applying Theorem~\ref{thm:transversality} for the action of the group $H = \oplus_{\ell=1}^L \O(N_\ell)$ yields
the following corollary for the second moment.

\begin{corollary}[The injectivity of the second moment~\cite{bendory2024transversality}, informal]\label{cor:secondmoment}
	Let $V$  be a real representation of a compact group $G$ of the form~\eqref{eq:V}. Let $\M \subset V$ be a 
	generic semi-algebraic set of dimension $M$. Then, if $2M<K$, the second moment is injective when restricted to $\M$. Namely, the signal $X$~\eqref{eq:signal} is determined uniquely, up to a sign, from the Gram matrices $X^\top X$~\eqref{eq:Gram}. 
\end{corollary}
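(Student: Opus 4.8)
The plan is to deduce the corollary directly from the transversality theorem (Theorem~\ref{thm:transversality}), once the fibers of the second moment map are identified with the orbits of $H = \prod_{\ell=1}^L \O(N_\ell)$. The key linear-algebra fact is that for matrices $A,B\in\R^{N\times R}$ one has $A^\top A = B^\top B$ if and only if $B = QA$ for some $Q\in\O(N)$. The ``if'' direction is immediate; for the converse, equality of the Gram matrices gives $\norm{Av} = \norm{Bv}$ for all $v$, so $Av \mapsto Bv$ is a well-defined linear isometry from $\im A$ onto $\im B$, and any such isometry extends to an element of $\O(N)$. Applying this blockwise to the tuple~\eqref{eq:signal} shows that $X^\top X = Y^\top Y$ (as tuples~\eqref{eq:Gram}) precisely when $Y = h\cdot X$ for some $h\in H$ acting by left multiplication on each block $\R^{N_\ell\times R_\ell}$. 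Hence the fiber of the second moment through $X$ is exactly the $H$-orbit of $X$; note in particular that $(-I_{N_1},\ldots,-I_{N_L})\in H$, which is why a global sign ambiguity is unavoidable.

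With this identification, the restriction of the second moment to $\M$ is injective up to a global sign if and only if for every $X\in\M$ and every $h\in H$, $h\cdot X\in\M$ forces $h\cdot X = \pm X$ --- which is verbatim the assertion that $\M$ is transverse to the orbits of $H$ in the sense of Definition~\ref{def:transversality}. (Since $H$ is compact, this purely set-theoretic notion of transversality is the natural one here.) It therefore remains only to verify the hypothesis $K = \dim V - k(H) > 2M$ of Theorem~\ref{thm:transversality} for this action, i.e., to compute the maximal $H$-orbit dimension $k(H)$.

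Because $H$ acts blockwise, $k(H) = \sum_{\ell=1}^L k_\ell$, where $k_\ell$ is the maximal dimension of an $\O(N_\ell)$-orbit in $\R^{N_\ell\times R_\ell}$ under left multiplication. The orbit dimension is largest at a matrix $X_\ell$ of maximal rank $\min(N_\ell, R_\ell)$, since the stabilizer of a rank-$r$ matrix is a copy of $\O(N_\ell - r)$ whose dimension decreases in $r$. For generic (maximal-rank) $X_\ell$ the stabilizer consists of the orthogonal maps fixing the column space pointwise, a copy of $\O(N_\ell - R_\ell)$ (trivial when $N_\ell \le R_\ell$), so by the orbit-stabilizer dimension relation $k_\ell = \dim\O(N_\ell) - \dim\O(N_\ell - R_\ell)$ with the convention $\dim\O(m) = 0$ for $m \le 0$. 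This is exactly~\eqref{eq:k_signal_recovery}; substituting into Theorem~\ref{thm:transversality} yields the corollary.

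I do not expect a substantive obstacle beyond what is already packaged inside Theorem~\ref{thm:transversality}. The only points that require care are the sign bookkeeping --- the definition of transversality is tailored precisely to permit the $\pm$ ambiguity forced by $-\mathrm{id}\in H$ --- and the computation of the generic stabilizer $\O(N_\ell - R_\ell)$ that produces the formula~\eqref{eq:k_signal_recovery} for $k(H)$.
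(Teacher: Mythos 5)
Your proposal is correct and follows the same route the paper takes: identify the fibers of $X\mapsto X^\top X$ with the orbits of $H=\prod_{\ell=1}^L \O(N_\ell)$, compute the maximal orbit dimension $k(H)$ as in~\eqref{eq:k_signal_recovery}, and invoke Theorem~\ref{thm:transversality} for this action. The only difference is that you spell out details the paper leaves implicit (the standard fact that $A^\top A=B^\top B$ forces $B=QA$ with $Q$ orthogonal, and the orbit–stabilizer count giving $k_\ell=\dim\O(N_\ell)-\dim\O(N_\ell-R_\ell)$), and these are carried out correctly.
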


\rev{In simple terms, orbit recovery with a generic prior is guaranteed, provided that the dimension $M$ of the prior is less than half of $K$. This dimension $K$ depends on the irreducible decomposition of $V$. To obtain a better intuition of this dependence, we first rewrite \eqref{eq:K} more explicitly as 
\begin{equation}\label{eq:K_elaborate}
K=\sum_{\ell=1}^L \dim \left(V_\ell^{\oplus R_\ell} \right)-k_\ell=
\sum_{\ell\in [L], N_\ell < R_\ell} N_\ell R_\ell-\frac{N_\ell^2-N_\ell}{2}+ \sum_{\ell\in [L], N_\ell \geq  R_\ell} \frac{R_\ell^2+R_\ell}{2}.  
\end{equation}
When considering the phase retrieval problem, in the setting of the discrete Fourier measurements of signals in $V=\mathbb{R}^N $, the decomposition of $V$ into irreducibles (for $N$ even) is a direct sum of two non-isomorphic 1-dimensional irreducibles and $N/2-1$ non-isomorphic 2-dimensional irreducibles. For a derivation,  see~\cite{edidin2023generic}.  Thus, in this setting, we have in total $L=N/2+1 $ non-isomorphic representations, each with a multiplicity of $R_\ell=1$ and dimension $1$ or $2$. In this setting, $K=L=N/2+1$,  and so the corollary gives us injectivity even for priors whose dimensionality is proportional to $N/4$, a quarter of the dimension of the ambient signal space. Similarly, this corollary allows for injectivity for relatively high-dimensional priors in cryo-EM settings, where we consider representations of the 
form $\oplus_{\ell =0}^L (V_\ell)^R$ where $V_\ell$ is the $(2\ell +1)$-dimensional representation of
$\SO(3)$ of spherical harmonic polynomials of degree $\ell$, and $R \geq  2 L+1$. As derived in Section~\ref{sec:cryoEM}, in this model $K \gtrsim 2N/3$ and it follows that we obtain injectivity
for priors whose dimensions are about $N/3$. 
However, for arbitrary representations, this need not be the case. For example, if $V$ consists of a single irreducible representation, then $L=1, R_\ell=1, N_\ell=\dim(V) $ and we have $K=1$, and so our corollary will only apply when the prior is of dimension $M=0$.} 

Corollary~\ref{cor:secondmoment} provides a key insight into the conditions under which the generalized phase retrieval problem becomes solvable. However, its applicability hinges on the exact knowledge of the Gram matrices $X^\top X$. Even a slight inaccuracy in their estimation renders the result invalid. This limitation is critical for real-world scientific and engineering applications (see, for example, Section~\ref{sec:cryoEM}), where noise and model errors are unavoidable.
An illustrative example of this issue is provided by Barnett, Epstein, Greengard, and Magland in \cite{barnett2022geometry}. In their work, they study the phase retrieval problem that arises in coherence diffraction imaging, where the objective is to recover an image with a known support from its power spectrum.
A well-known result, attributed to Hayes \cite{hayes1982reconstruction}, asserts that if the support of the image is sufficiently small, the problem typically admits a unique solution (up to certain symmetries). In the terminology of this paper, this result suggests that the linear subspace of all images with a given support is generically transversed by the power spectrum. 
However, the detailed analysis of~\cite{barnett2022geometry} uncovers an important caveat: In general, the problem is unstable. Specifically, nearly identical power spectra can correspond to entirely different images, even when the supports are the same. This highlights the importance of not only understanding the uniqueness of the problem but also examining how small errors in the observations impact the recovery process. The central focus of our paper is to prove that small errors in the estimation of the Gram matrices lead to correspondingly small errors in signal recovery.

\section{Main results}

\label{sec:Bi-Lipschitz-results}
This section introduces our problem formally and then states the main results and their implications.

\subsection{Problem formulation}
Let $V = \prod_{\ell =1}^L \R^{N_\ell \times R_\ell}$. 
An element of $V$ can be represented by an $L$-tuple of $N_\ell \times
R_\ell$ matrices $X = (X_1, \ldots , X_L)$ or equivalently as a single block matrix of size $\sum_{\ell=1}^L N_\ell \times\sum_{\ell=1}^L R_L$. The second moment is the $L$-tuple of Gram matrices $X^TX:=(X_1^TX_1, \ldots, X_L^TX_L)$, 
which we can identify with a single block diagonal matrix $\diag(X_1^TX_1, \ldots, X_L^TX_L)$. 
Since the matrices $X_\ell^TX_\ell$ are symmetric, positive semidefinite, they have a well-defined symmetric, positive semidefinite square root,
as does the block diagonal matrix $X^TX$. 

We consider the problem of determining priors  $\M\subset V$ for which the 
map $X \to \sqrt{X^TX}$ is bi-Lipschitz when restricted to $\M$.
We will consider priors as in Corollary \ref{cor:secondmoment}, where the second moment determines the signal up to sign. We would like to determine conditions on $\M$ that additionally ensure the second moment is bi-Lipschitz with respect to the 
natural metric on $V/\{\pm 1\}$,
\begin{equation} \label{eq:sigma_metric}
	\rhos(X,Y) = \min\{\norm{X+Y}, \norm{X-Y}\},
\end{equation}
\rev{where throughout $\|\cdot \| $ denotes the Frobenius norm $\|A\|=\sqrt{\sum_{i,j}A_{ij}^2} $.}
\rev{The reason we consider the bi-Lipschitzness of the second moment with respect to 
the metric $d_\sigma$ is because it implies the most natural notion of injectivity on the prior sets $\M$ which we
consider, such as linear subspaces and sparse vectors. As in~\cite{bendory2024transversality}, when we say that the second moment is injective on $\M$ we mean that $x,y \in \M$ have the same second moment if and only if $x = \pm y$; i.e.,  $x$ and $y$ differ by at most a global sign.}

More precisely, our goal is to show that for suitably generic subsets $\M \subset V$, the second moment is bi-Lipschitz in the following sense:

\begin{definition}[bi-Lipschitz] \label{def:biLipschitz}
	Let $\cM \subset V$. We say that the second moment is bi-Lipschitz on $\M$ if 
	the map $$\cM \to \prod_{i=1}^L \R^{R_\ell \times R_\ell};\, X \mapsto \sqrt{X^TX},$$
	satisfies the Lipschitz inequalities
	$$C_1 d_\sigma(X,Y) \leq \norm{\sqrt{X^TX} - \sqrt{Y^TY}} \leq C_2 d_\sigma(X,Y),$$
	for some non-zero, finite constants $C_1,C_2$.
\end{definition}

Notably, we focus on the matrix square root of the second moment instead of the second moment itself,  
because the second moment is quadratic, and thus a simple scaling argument as in \cite[Example 10]{cahill2024towards} 
shows that the second moment is not bi-Lipschitz with respect to the metric $\rhos$.  In contrast, a recent theorem of Derksen~\cite[Theorem 3]{derksen2024bi}  implies that, when applied to the whole matrix domain $V$, 
the square root map
induces a bi-Lipschitz map with respect to the semi-metric 
$$ d_H(X,Y)=\min_{h\in H} \|X-hY\|,$$
where $H = \prod_{\ell =1}^L \O(N_\ell)$. 
Moreover, there are explicit 
Lipschitz bounds of $1$ and $\sqrt{2}$, namely,
\begin{equation} \label{eq.derksenbound}
	d_H(X,Y) \leq \norm{\sqrt{X^TX} - \sqrt{Y^TY}} \leq \sqrt{2} \cdot d_H(X,Y).
\end{equation}
Accordingly, proving  bi-Lipschitzness  of the second moment with respect to $\rhos$, on a subset $\M \subset V$, is equivalent to showing that  there is a bi-Lipschitz equivalence between the metrics $\rhos$ and $\rhoh$ when restricted to $\M$. Moreover, since  
$\rhos(X,Y)\geq \rhoh(X,Y)$ for all $X,Y \in V$, it follows from \eqref{eq.derksenbound} that
$$ \norm{\sqrt{X^TX} - \sqrt{Y^TY}} \leq \sqrt{2} \cdot \rhos(X,Y).$$ 
Thus, we only need to prove that under suitable assumptions on $\M$, there exists some $C>0$ such that,
\begin{equation}\label{eq:nontang}
	C \rhos(X,Y)\leq \rhoh(X,Y), \quad \forall X,Y \in \M.
\end{equation}

\rev{
\begin{remark}
Our focus in this paper is on the metric $\rhos$ obtained from quotienting by the group $\{ \pm 1 \} $, since this is the ambiguity group obtained for generic priors, as explained above. In other settings, it is common to accept a larger ambiguity of all elements of $G$. This would correpsond to a choice of a metric $d_G$ obtained by quotienting over the group $G$ instead of $\{\pm 1\} $. We note that, if $G$ contains $-1$, then   bi-Lipschitzness for $\rhos$ will imply bi-Lipschitzness for this metric as well, as $\rhoh \leq d_G \leq \rhos $. 
\end{remark}
}
\begin{remark}
	In this paper, we choose all matrix norms to be the \rev{Frobenius norm}. This enables us to use the sub-multiplicativity property  $\|AB\|\leq \|A\| \|B\| $ for our proofs later on. This choice does not impact the generality of our result: due to the equivalence of norms on finite-dimensional spaces, the bi-Lipschitz results automatically apply for all matrix norms.
\end{remark}

\subsection{Transversality versus bi-Lipschitzness}
A necessary condition for~\eqref{eq:nontang} to hold is that 
the second moment map is transverse to the orbits of the group $H = \prod_{\ell =1}^L \O(N_\ell)$; see Definition~\ref{def:transversality}.
Otherwise, there would be some $X,Y \in \M$ such that $X\neq Y, -Y $ but $X$ and $Y$ are in the same $H$ orbit. This would imply that $\rhoh(X,Y)=0 $ while $\rhos(X,Y)$ is positive, violating~\eqref{eq:nontang}. 

At first thought, it may seem reasonable to assume that transversality would imply \eqref{eq:nontang}, at least when $\M$ is ``nicely behaved". 
As it turns out, when $\M$ is a linear space, this is indeed true, see Corollary~\ref{thm.linear_bilipschitz}. 
However, in general, the fact that a signal can be uniquely determined from the second moment (up to a global sign) does not preclude
the possibility that this reconstruction is very unstable. Indeed, there are very simple non-linear $\M$ for which this is no longer true; see Example~\ref{ex:line_segment} below.
Our main results, Theorem~\ref{thm:meta} and Theorem~\ref{thm.manifold}, state that a necessary condition for the second moment map to be bi-Lipschitz on a structured semi-algebraic set $\M$
is that a bigger set $\hatM$
is transverse to the orbits of $H = \prod_{\ell =1}^N \O(N_\ell)$. 
For polyhedral sets, 
we give an explicit description of
$\hatM$ in~Theorem~\ref{thm:meta}; when
$\M$ is a compact manifold, then $\hatM$ is the union of the embedded tangent spaces at all points of $\M$, see Theorem~\ref{thm.manifold}.
Combining these results with the transversality theorem, Theorem \ref{thm:transversality} (for full version, see~\cite[Theorem 2.2]{bendory2024transversality}), we obtain a number of corollaries which are stated in Section~\ref{sec:implications} and Section~\ref{sec:manifold}.

\begin{figure}[h]
	\centering
	\includegraphics[width=0.4\linewidth]{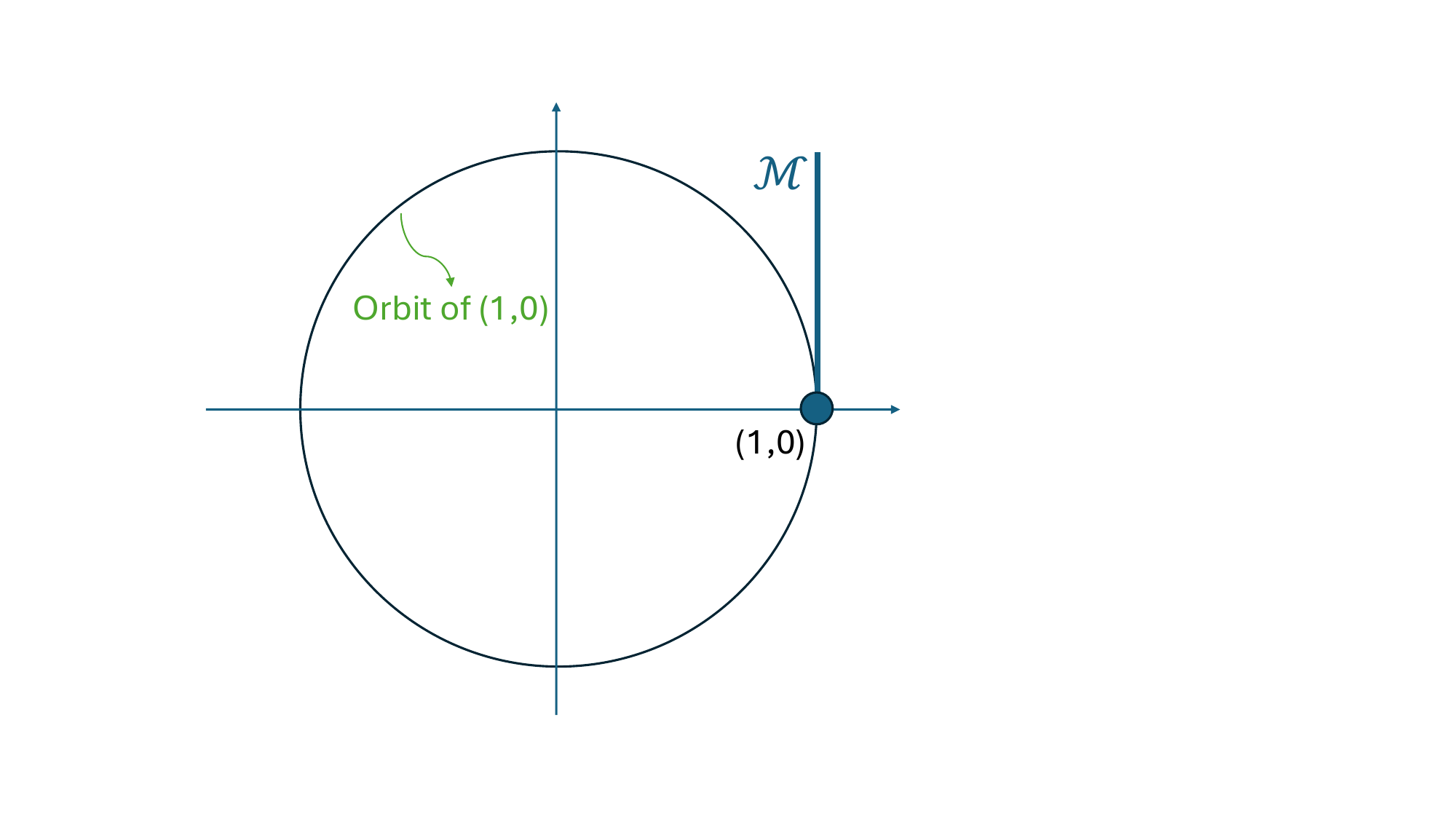}
	\caption{\it \small A set $\M$ which is transverse but does not have the bi-Lipschitz property; see Example~\ref{ex:line_segment}.}
	\label{fig:example}
\end{figure}

We now provide a simple, explicit example of a case where transversality does not imply bi-Lipschitzness.

\begin{example}[Transversality does not imply bi-Lipschitzness] \label{ex:line_segment}
	Consider the case where $V=\R^2$ and $H=\O(2)$. Consider the line segment $\M \subset \R^2 $ defined by 
	$$\M=\{(1,y)\, | \, y\in [0,1] \}.$$  	
	It is obvious that $\M$ is transverse since every two distinct elements in $\M$ have distinct norms. However, the second moment will not be bi-Lipschitz on $\M$. This is because the orbit of the point $(1,0)$, which is just the unit circle, is tangent to $\M$. This situation is visualized in Figure~\ref{fig:example}. Formally, for small values of $y$, the point $(\cos(y),\sin(y))$ will be in the orbit of $(1,0)$, and its distances from the point $(1,y)$ will be of the order $y^2$, namely
	$$\rhoh((1,0),(1,y))\leq \|(\cos(y),\sin(y))-(1,y)\|.  $$
	Using the Taylor approximation of cosine and sine around zero we have  $\cos(y)-1=\frac{y^2}{2}+O(y^4) $ and $\sin(y)-y=-\frac{y^3}{3!}+O(y^5) $, and therefore it is not difficult to bound the expression above by some constant multiplied by $y^2$. On the other hand, for all $y\geq 0$ we will have 
	$$\rhos((1,0),(1,y))=y .$$
	It follows that \eqref{eq:nontang} cannot hold.
\end{example}
We note that, in the example we just discussed,  while the line segment $\M$ is transverse, if we extend the segment to an affine
line it is no longer transverse.
Our main theorem, which we will now state, guarantees in particular that if the affine hull $\hat \M $ of a line segment $\M$ is transverse, then the second moment is bi-Lipschitz on $\M$.

\subsection{Main result}
We now give a full formulation of our main result, Theorem~\ref{thm:meta}. 
In Section~\ref{sec:implications}, we show that it  contains as special cases the three priors discussed in \cite{bendory2024transversality}: (i) linear (ii) sparsity and (iii) generative ReLU networks.
Section~\ref{sec:manifold} discusses the manifold case.  
The following theorem is proved in Section~\ref{sec:proof_meta_thm}.

\begin{theorem}\label{thm:meta}
	Assume that  (i) $\cM$ is either compact or homogeneous\footnote{Recall that a subset $\M \subset V$ is {\em homogeneous} if for all $x \in \M$ and $\lambda \in \R$, $\lambda x \in \M$.} and (ii) $\M$ is contained in a  finite union of affine spaces $\A_1,\ldots,\A_L $. Denote by $\V_i$ the linear subspace $\A_i-\A_i $ and assume that the set 
	$$\hatM:=\cup_{i,j=1}^L \left( \A_i+\V_j \right) $$
	is transverse to the orbits of $H:= \prod_{\ell =1}^L \O(N_\ell)$.
	Then, the second moment is bi-Lipschitz on $\M$. 
\end{theorem}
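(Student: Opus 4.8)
The plan is to reduce the bi-Lipschitz claim to the inequality \eqref{eq:nontang}, namely to finding $C > 0$ with $C\,\rhos(X,Y) \le \rhoh(X,Y)$ for all $X,Y \in \M$; the upper bound $\norm{\sqrt{X^TX}-\sqrt{Y^TY}} \le \sqrt{2}\,\rhos(X,Y)$ is already free from \eqref{eq.derksenbound}. So the whole proof is about the lower bound, i.e. showing that on $\M$ the $H$-orbit (semi-)metric $\rhoh$ cannot collapse relative to $\rhos$. I would argue by contradiction: if no such $C$ exists, there are sequences $X_n, Y_n \in \M$ with $\rhoh(X_n,Y_n)/\rhos(X_n,Y_n) \to 0$. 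Normalizing by $\rhos$ (legitimate in the homogeneous case directly, and in the compact case after passing to tangent-direction limits) and using compactness of $H$, I would extract convergent subsequences, say $X_n \to X$, $Y_n \to Y$, and orthogonal blocks $h_n \to h \in H$ realizing the orbit distance, with $\norm{X_n - h_n Y_n} \to 0$ while $\rhos(X_n,Y_n)$ stays bounded below — forcing $X = hY$ but $X \ne \pm Y$ in the limit, which would be a transversality failure. The subtlety is that the limiting points $X, Y$ need not lie in $\M$ itself: because we have rescaled differences, the relevant limit objects live in the affine pieces $\A_i$ translated by the linear directions $\V_j$. This is precisely why the hypothesis is stated in terms of $\hatM = \cup_{i,j}(\A_i + \V_j)$ rather than $\M$: the "infinitesimal" or "rescaled" versions of pairs in $\M$ populate $\hatM$, and transversality of $\hatM$ is exactly what rules out the collapse.

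Concretely, I would split into the two cases of hypothesis (i). If $\M$ is homogeneous, I would first reduce to $X,Y$ of unit norm (scaling both sides of \eqref{eq:nontang} equally), so the failure sequences lie on a compact sphere; passing to limits inside each affine piece — noting that a homogeneous set contained in $\cup \A_i$ forces each $\A_i$ through the origin, hence $\A_i = \V_i$, so $\hatM = \cup_{i,j}(\V_i + \V_j)$ is already a finite union of linear spaces — I get limit points $X, Y \in \hatM$, still with $\rhos(X,Y)$ bounded below but $\rhoh(X,Y) = 0$, contradicting transversality of $\hatM$. If $\M$ is compact, the failure sequences $X_n, Y_n$ must have $\rhos(X_n,Y_n) \to 0$ (else $\rhoh$ is bounded below by a routine compactness/transversality argument, since $\rhoh(X,Y)=0$ with $X \ne \pm Y$ on compact $\M$ is impossible), so $X_n$ and $Y_n$ approach a common point $p$ lying in some $\A_i$; writing $Y_n - X_n = t_n u_n$ with $t_n \to 0$, $\norm{u_n}=1$, and $u_n \to u$, the direction $u$ lies in the linear span of the differences, hence in $\V_i$ (or, if $X_n,Y_n$ straddle two pieces $\A_i,\A_j$, one can still place the relevant secant directions inside $\V_i$ and $\V_j$). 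Here I would expand $\sqrt{X_n^TX_n}$ and $\sqrt{Y_n^TY_n}$ to first order and use Derksen's two-sided bound \eqref{eq.derksenbound} to convert the hypothesized fast decay of $\norm{\sqrt{X_n^TX_n}-\sqrt{Y_n^TY_n}}$ into fast decay of $\rhoh$, and then show this forces the secant line through $p$ in direction $u$ — which sits inside $\A_i + \V_i \subseteq \hatM$ — to be tangent to the $H$-orbit of $p$ at $p$, i.e. $u$ lies in the tangent space $T_p(H\cdot p)$, so that $p + \varepsilon u \in \hatM$ is brought arbitrarily close (to order $\varepsilon^2$) to the orbit of $p$, contradicting transversality of $\hatM$ in its sharp, first-order form.

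The step I expect to be the main obstacle is making rigorous the passage from "the inequality \eqref{eq:nontang} fails" to "$\hatM$ contains a point and an orbit-tangent direction witnessing non-transversality" — in other words, quantitatively linking the second-order tangency phenomenon of Example~\ref{ex:line_segment} with the set-theoretic transversality of $\hatM$. This requires (a) controlling the matrix square root near rank-degenerate points $X$ (where $X^TX$ may be singular and $\sqrt{\cdot}$ is only Hölder, not Lipschitz, in general) — the bound \eqref{eq.derksenbound} is exactly the tool that sidesteps this, but applying it along the failure sequence needs care; (b) handling the bookkeeping when the two points of a failure pair lie in different affine pieces $\A_i \ne \A_j$, which is why $\hatM$ mixes indices $i \ne j$; and (c) in the compact case, upgrading a limiting tangent direction to an actual point of $\hatM$ near the orbit — this is where the explicit form $\hatM = \cup_{i,j}(\A_i + \V_j)$ does the work, since $\A_i + \V_i$ contains the entire affine line through $p$ in direction $u$. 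Once these are in place, the contradiction with Theorem~\ref{thm:transversality}'s conclusion for $\hatM$ closes the argument.
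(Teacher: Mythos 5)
Your overall strategy --- reduce to the lower bound \eqref{eq:nontang}, argue by contradiction along sequences with $\rhoh/\rhos\to 0$, and localize to the affine pieces --- is the same as the paper's, but there is a genuine gap at precisely the step you flag as the main obstacle, and that step is the crux of the theorem. Transversality here is purely set-theoretic (Definition~\ref{def:transversality}): it forbids two points of $\hatM$ from lying on the same $H$-orbit unless they agree up to sign, but it says nothing, by itself, about an orbit being \emph{tangent} to $\hatM$. So your concluding move in the compact case --- that $u\in T_p(H\cdot p)$ together with $p+\R u\subset \hatM$ ``contradicts transversality of $\hatM$ in its sharp, first-order form'' --- is not a contradiction of the stated hypothesis; Example~\ref{ex:line_segment} is exactly a transverse set tangent to an orbit. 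What is missing is the mechanism that converts the infinitesimal condition (tangency, equivalently skew-symmetry of $X_0^TS$) into an \emph{exact} orbit coincidence between two honest points of $\hatM$. In the paper this is Proposition~\ref{prop.ATB}: $X_0^TS$ is skew-symmetric if and only if $X_0+S$ and $X_0-S$ lie on the same $H$-orbit. Writing $S=A-B$ with $A\in\V_i$, $B\in\V_j$, the two points $X_0+A-B\in\A_i+\V_j$ and $X_0+B-A\in\A_j+\V_i$ both lie in $\hatM$, so set-theoretic transversality of $\hatM$ forces $S=0$ (using $X_0\neq 0$). This gives the strictly positive constant $c$ of \eqref{eq.skew_deviation}, and then a quantitative first-order expansion of $\norm{X_k^TX_k-Y_k^TY_k}$, combined with the local comparison Lemma~\ref{thm_rhog_local_lowlip}, bounds $\rhoh/\rhos$ below along the sequence, yielding the contradiction. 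Without Proposition~\ref{prop.ATB} (or a substitute converting first-order tangency into set-theoretic non-transversality of $\hatM$), your argument silently replaces the hypothesis by an unproved ``first-order'' strengthening of it.

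A second, independent problem is your homogeneous-case shortcut. You cannot make both $X_n$ and $Y_n$ unit norm by a common scaling, and rescaling by $\rhos(X_n,Y_n)$ destroys boundedness; the legitimate normalization (as in Lemma~\ref{thm_global_to_local}) is by $\max\{\norm{X_n},\norm{Y_n}\}$, which keeps the sequences in the unit ball. But then the failure sequences typically have $\rhos(X_n,Y_n)\to 0$ as well --- this is the tangency mode, e.g.\ two subspaces meeting in a line, with pairs of unit-norm points approaching a common point of the intersection --- so the limits satisfy $X_0=\pm Y_0$ and there is no set-theoretic violation to extract; your claim that $\rhos$ ``stays bounded below'' is unjustified and in general false. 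Hence the homogeneous case requires the same local, quantitative analysis around the common limit $X_0$ (the constant $c$ and the expansion above), not a separate soft limiting argument; the soft argument only disposes of the easy failure mode already excluded by transversality of $\M$ itself.
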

%

\begin{remark}
We note that the condition that $\M$ is either compact or homogenous is necessary. If we drop this condition, then Theorem~\ref{thm:meta} would imply that affine transverse spaces always have the bi-Lipschitz property. However, the following example shows that this is not true.
\begin{example}
Consider the action of $H=\O(1)^4=\{-1,1\}^4 $ on $\R^4$. For $s,t\in \R$ denote 
	$$v[s,t]:=[s,t,s+1,t+1],$$
	and consider the two dimensional affine space $\M=\{v[s,t]\,| \,s,t\in \R\}$. We claim that $\M$ is transverse but does not have the bi-Lipschitz property. 
	
	The space is transverse, for if $v[s,t]$ and $v[s',t']$ were in the same $H$ orbit, then in particular $|s|=|s'|,|s+1|=|s'+1|$. This implies that $s=s'$, for if for example $s$ were positive while $s'$ were negative, then we would have $|s+1|>|s'+1| $. A symmetric argument shows that $t=t'$. 
	
	To prove that bi-Lipschitzness does not hold, consider for $a>0$ vectors of the form $X_a=v[a,a], Y_a=v[a,-a]$. We will have that 
	$$\rhos(X_a,Y_a)=\|X_a-Y_a\|=\|(0,2a,0,2a)\|=\sqrt{8}a .$$
	On the other hand, by applying the group element $h=[1,-1,1,-1]$ to $Y_a$ we have 
	$$
	\rhoh(X_a,Y_a)\leq \|X_a-hY_a\| =\|[a,a,a+1,a+1]-[a,a,a+1,a-1]\|=2.$$
	Considering the limit where $a\rightarrow \infty$ we see that \eqref{eq:nontang} cannot hold.
\end{example}
\end{remark}

\subsection{Implications} 
\label{sec:implications}

\paragraph{Linear priors.} First, we consider the case of a linear prior, where $\M$ is a linear subspace. In this case, the space $\hatM$ in Theorem~\ref{thm:meta} is equal to $\M$. Thus, the theorem implies that if a linear subspace $\M$ is transverse, then it will always have the bi-Lipschitz property. Moreover, by Proposition~\ref{prop.ATB}, 
a linear subspace is transverse if and only if there does not exist a pair of non-zero matrices $X, Y \in \M$ such that $X^TY$ is skew-symmetric. Combining this with Theorem \ref{thm:transversality} (for full version see  ~\cite[Theorem 2.2]{bendory2024transversality})
we conclude that the second moment is bilipschitz on a 
generic-$\GL(V)$ translate of a linear subspace of dimension $M$ with $2M < K$, where $K$ is defined in~\eqref{eq:K}. Since the set of linear subspaces
of fixed dimension forms a single $\GL(V)$ orbit, we can rephrase this fact as follows.

\begin{corollary}[Linear] \label{thm.linear_bilipschitz} 
	If $\M$ is a linear subspace of $V$, then the following are equivalent: 
	(i) The second moment is injective on $\M$; (ii) the second moment is bi-Lipschitz on $\M$;
	(iii) for every $X, Y \in \M$ both non-zero, the product $X^T Y$ is not skew-symmetric.
	In particular, \rev{for every natural number $M$ satisfying  $2M<K$, (where $K$ is as defined in \eqref{eq:K_elaborate})} the second moment is bi-Lipschitz \rev{when} restricted to a generic linear subspace $\M$ of dimension $M$.
\end{corollary}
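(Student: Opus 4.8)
The plan is to prove Corollary~\ref{thm.linear_bilipschitz} by establishing the chain of implications (ii) $\Rightarrow$ (i) $\Rightarrow$ (iii) $\Rightarrow$ (ii), with the final "in particular" clause following by combining (iii) with the transversality theorem~\cite[Theorem 2.2]{bendory2024transversality}. The implication (ii) $\Rightarrow$ (i) is immediate: a bi-Lipschitz map is injective, and by~\eqref{eq.derksenbound} the square root map is injective precisely when $\rhoh(X,Y)=0 \Rightarrow \rhos(X,Y)=0$, which for a linear subspace means exactly that $X$ and $Y$ in the same $H$-orbit forces $X = \pm Y$, i.e. transversality of $\M$ in the sense of Definition~\ref{def:transversality}. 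The implication (i) $\Rightarrow$ (iii) is where I would invoke Proposition~\ref{prop.ATB} (stated later in the paper): transversality of a linear subspace is equivalent to the nonexistence of nonzero $X, Y \in \M$ with $X^TY$ skew-symmetric. I would spell out briefly why: if $X^TY$ is skew-symmetric, one exhibits a curve $h(t) \in H$ (infinitesimally generated by the skew-symmetric blocks of $X^TY$) moving $Y$ toward $X$ within $\M$, contradicting transversality; conversely a nontrivial orbit coincidence inside a linear space yields such a skew-symmetric product. Since this is precisely Proposition~\ref{prop.ATB}, I would simply cite it rather than reprove it.

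The substantive step is (iii) $\Rightarrow$ (ii), and here I would reduce to the main result, Theorem~\ref{thm:meta}. A linear subspace $\M$ is homogeneous, so hypothesis (i) of Theorem~\ref{thm:meta} holds; taking the single affine space $\A_1 = \M$ we get $\V_1 = \A_1 - \A_1 = \M$, so $\hatM = \A_1 + \V_1 = \M$. Thus Theorem~\ref{thm:meta} applies as soon as $\M$ itself is transverse to the orbits of $H$, and by the equivalence (i) $\Leftrightarrow$ (iii) just established (via Proposition~\ref{prop.ATB}), condition (iii) is exactly that transversality. Hence (iii) $\Rightarrow$ (ii). The only thing to be careful about is that Theorem~\ref{thm:meta} requires $\M$ to be contained in the union of \emph{finitely many} affine spaces and that $\hatM$ be transverse — both are trivially satisfied here with $L=1$ affine piece equal to $\M$ — so no new work beyond invoking the theorem is needed.

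For the final "in particular" assertion, I would recall that the set of $M$-dimensional linear subspaces of $V$ forms a single orbit under $\GL(V)$, so a generic linear subspace of dimension $M$ is a generic $\GL(V)$-translate of any fixed one; by~\cite[Theorem 2.2]{bendory2024transversality}, if $K > 2M$ then such a generic translate is transverse to the orbits of $H$ (this is the content of Corollary~\ref{cor:secondmoment} specialized to linear priors). Transversality then gives condition (iii), which by the above yields bi-Lipschitzness.

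I expect the main conceptual obstacle to be nothing deep within this corollary itself — all the heavy lifting is outsourced to Theorem~\ref{thm:meta}, Proposition~\ref{prop.ATB}, and the transversality theorem. The one point requiring genuine care is making the reduction $\hatM = \M$ airtight: one must check that with a single affine piece the "diagonal + off-diagonal" construction $\cup_{i,j}(\A_i + \V_j)$ genuinely collapses to $\M$ and does not secretly enlarge it (it does not, since $\M + \M = \M$ for a linear subspace), and that the compactness-or-homogeneity dichotomy is satisfied by the homogeneous branch even though $\M$ is noncompact. Both are routine, so the proof is essentially a short bookkeeping argument assembling previously established results.
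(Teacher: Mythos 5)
Your proposal is correct and follows essentially the same route as the paper: the paper also derives the corollary by noting that for a linear subspace the set $\hatM$ of Theorem~\ref{thm:meta} collapses to $\M$ (with the homogeneous branch of its hypotheses), uses Proposition~\ref{prop.ATB} to identify transversality (equivalently, injectivity up to sign) with the non-existence of a nonzero pair $X,Y\in\M$ with $X^TY$ skew-symmetric, and invokes \cite[Theorem 2.2]{bendory2024transversality} together with the single-$\GL(V)$-orbit observation for the generic statement. Your parenthetical heuristic about a curve $h(t)$ generated by the skew-symmetric blocks is not how Proposition~\ref{prop.ATB} actually operates, but since you ultimately just cite that proposition, this does not affect the argument.
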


\begin{remark} In the case of frame phase retrieval, where $X \in V$ is an $N$-tuple of $1 \times 1$ matrices, then the condition of Corollary~\ref{thm.linear_bilipschitz}~(iii) is equivalent to the statement that $\M$
	does not contain any pair of non-zero vectors with complementary supports. 
\end{remark}

\paragraph{Sparsity.} We next discuss the sparsity prior. If $\mathcal{B}$ is a basis (respectively, an orthonormal basis) for $V$,  then we say that a vector $x \in V$ is $M$-sparse
with respect to the basis~$\mathcal{B}$ if the expansion of $x$ in the (orthonormal) basis $\mathcal{B}$ has at most
$M$ non-zero coefficients. Let us denote by $\M$ the set of all $M$-sparse vectors with respect to a given basis $\mathcal{B}$. 
This set is a finite union of linear spaces of dimension $M$, where each such vector space is defined by the constraint that all but $M$ entries in specified positions are zero. Any two (orthonormal) bases are related by an element of $\GL(V)$ (respectively, $\O(V)$), so
if $\M$ is the set of $M$-sparse vectors with respect to a basis $\mathcal{B}$ and $\M'$ is the set of
$M$-sparse vectors with respect to basis $\B'$, then $\M' = A \cdot \M$ for some $A \in \GL(V)$ (or $\O(V)$ if the bases are orthonormal). 

Applying Theorem \ref{thm:meta} for the case of a finite union of vector spaces $\M=\cup_{i=1}^L \V_i $, we see that the second moment will be bi-Lipschitz on this space, provided that $\hatM=\cup_{i,j=1}^L \V_i+\V_j $ is transverse. Combining this with Theorem \ref{thm:transversality} (for full version see  ~\cite[Theorem 2.2]{bendory2024transversality}) applied to $2M$ dimensional subset $\hatM \subset V$, 
we obtain the following corollary.

\begin{corollary}[Sparsity] \label{cor:sparsity}
	\rev{For every natural number $M$ satisfying $4M<K$ (respectively, $4M+2$), where $K$ is as in \eqref{eq:K_elaborate}}, the second moment is bi-Lipschitz on the set $\M$ of vectors which are $M$-sparse with respect to a generic basis (respectively, a generic orthonormal basis).
\end{corollary}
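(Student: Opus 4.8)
The plan is to derive Corollary~\ref{cor:sparsity} directly from Theorem~\ref{thm:meta}, combined with the transversality theorem \cite[Theorem 2.2]{bendory2024transversality} and its orthogonal counterpart \cite[Theorem 2.3]{bendory2024transversality}; the argument is essentially unwinding the definitions together with a dimension count, with no new analytic content.

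First I would set up the hypotheses of Theorem~\ref{thm:meta}. The set $\M$ of $M$-sparse vectors with respect to a basis $\mathcal{B}$ is the finite union $\cup_i \V_i$ of the $M$-dimensional coordinate subspaces $\V_i$ indexed by the $M$-element subsets of $\mathcal{B}$. Since $\M$ is a union of linear subspaces it is homogeneous — note it is in general non-compact, so it is exactly the homogeneous branch of hypothesis (i) that we invoke — and taking $\A_i := \V_i$, so that $\V_i = \A_i - \A_i$, verifies hypothesis (ii). Thus Theorem~\ref{thm:meta} reduces the corollary to proving that the auxiliary set $\hatM = \cup_{i,j}(\V_i + \V_j)$ is transverse to the orbits of $H = \prod_{\ell=1}^L O(N_\ell)$.

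Next I would identify $\hatM$ explicitly: the Minkowski sum $\V_i + \V_j$ is the coordinate subspace supported on $S_i \cup S_j$, where $S_i, S_j$ are the index sets of $\V_i, \V_j$, and as $i,j$ vary these unions run over all subsets of $\mathcal{B}$ of size between $M$ and $2M$. Hence $\hatM$ is precisely the set of $2M$-sparse vectors with respect to $\mathcal{B}$ — again a finite union of linear subspaces, and a semi-algebraic set of dimension $2M$. (The hypothesis $K > 4M$ forces $\dim V = K + k(H) > 4M \geq 2M$, so this dimension count is genuine and $\hatM$ does not degenerate to all of $V$.) Writing $\hatM_0$ for the $2M$-sparse model set attached to a fixed reference basis, we have $\hatM = A \cdot \hatM_0$ where $A$ is the change-of-basis transformation; as $\mathcal{B}$ ranges over all bases, $A$ ranges over all of $\GL(V)$, and over all of $O(V)$ when $\mathcal{B}$ ranges over the orthonormal bases for the fixed inner product. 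Applying \cite[Theorem 2.2]{bendory2024transversality} to the fixed set $\hatM_0$ of dimension $2M$ with $K > 2(2M) = 4M$ — respectively \cite[Theorem 2.3]{bendory2024transversality} with $K > 4M+2$ in the orthonormal case — shows that a generic translate $\hatM = A \cdot \hatM_0$ is transverse to the orbits of $H$. Feeding this back into Theorem~\ref{thm:meta} yields that the second moment is bi-Lipschitz on $\M$.

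I do not expect a serious obstacle. The only points needing care are bookkeeping: the observation that ``generic basis'' corresponds exactly to ``generic change-of-basis transformation'' (immediate, since that correspondence is a bijection onto $\GL(V)$, respectively $O(V)$), and the verification that the set $\hatM$ produced by Theorem~\ref{thm:meta} is genuinely the full $2M$-sparse set rather than a proper subset, so that the dimension $2M$ is the correct input to the transversality theorem — this is what explains the factors of $4$, and the extra $+2$ in the orthonormal case is inherited verbatim from the gap between \cite[Theorem 2.2]{bendory2024transversality} and \cite[Theorem 2.3]{bendory2024transversality}. All the analytic work (the actual bi-Lipschitz estimate) is already carried out in Theorem~\ref{thm:meta}, and all the genericity work in the transversality theorem, so the corollary is a clean consequence of the two.
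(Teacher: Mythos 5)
Your proposal is correct and follows essentially the same route as the paper: realize the $M$-sparse set as a finite union of $M$-dimensional coordinate subspaces (a homogeneous set, so Theorem~\ref{thm:meta} applies with $\A_i=\V_i$), identify $\hatM=\cup_{i,j}(\V_i+\V_j)$ as the $2M$-dimensional ($2M$-sparse) set, and invoke \cite[Theorems 2.2, 2.3]{bendory2024transversality} under $K>4M$ (resp.\ $K>4M+2$), using that changing the (orthonormal) basis amounts to acting by a generic element of $\GL(V)$ (resp.\ $O(V)$). Your extra bookkeeping (explicitly checking the homogeneity hypothesis and that $\hatM$ is the full $2M$-sparse set) simply spells out what the paper leaves implicit.
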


\paragraph{Deep ReLu neural networks.}Finally, we consider priors generated by ReLU networks. Namely, we consider signals $x\in V\equiv \mathbb{R}^N$ which are of the form~\eqref{eq:relu}, where $z$ resides in a low $M$-dimensional latent space $[0,1]^M $. We make no requirements on the dimensions of $A_1,\ldots,A_\ell$ except that the composition is well defined, and  $A_\ell$ is \rev{ a square  matrix}.  
Under these conditions, the following corollary shows that almost all (in the Lebesgue sense) choices of neural network parameters lead to a signal that can be stably inverted.

\begin{corollary}[ReLU networks] \label{cor:ReLu}
	For arbitrary affine transformations $A_1, \ldots A_{\ell-1}$ and a generic choice of the parameters of $A_\ell\in \mathbb{R}^{N\times N}$, if $4M<K$ \rev{(for $K$ as in \eqref{eq:K_elaborate})}, then the second moment is bi-Lipschitz on the image of the neural network defined in~\eqref{eq:relu}.
\end{corollary}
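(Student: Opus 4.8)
The plan is to deduce Corollary~\ref{cor:ReLu} directly from Theorem~\ref{thm:meta} by exhibiting the image $\M$ of the ReLU network as a set that is (i) contained in a finite union of affine spaces and (ii) whose associated ``doubled'' set $\hatM$ is transverse to the orbits of $H$ for a generic choice of the last-layer parameters. First I would recall the standard fact that a ReLU network $z \mapsto A_\ell \circ \eta \circ A_{\ell-1} \circ \cdots \circ \eta \circ A_1(z)$ is a continuous piecewise-affine map: the domain $[0,1]^M$ decomposes into finitely many polytopal pieces $P_1,\dots,P_r$ (the linear regions determined by which ReLU units are active), and on each piece $P_s$ the map agrees with a single affine map $F_s = A_\ell \circ L_s$, where $L_s$ is the affine map obtained by replacing each $\eta$ with the appropriate coordinate projection. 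Hence $\M = \bigcup_{s=1}^r F_s(P_s)$, and each $F_s(P_s)$ is contained in the affine space $\A_s := F_s(\mathrm{aff}(P_s)) \subseteq \im F_s$, which is an affine subspace of $V$ of dimension at most $M$. Since $[0,1]^M$ is compact and the $F_s$ are continuous, $\M$ is compact, so hypothesis (i) of Theorem~\ref{thm:meta} holds.

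Next I would identify $\hatM$ and reduce its transversality to a genericity statement. With $\V_s := \A_s - \A_s$ the linear part of $\A_s$, Theorem~\ref{thm:meta} asks that $\hatM = \bigcup_{s,t}(\A_s + \V_t)$ be transverse to the $H$-orbits. Each $\A_s + \V_t$ is an affine space of dimension at most $2M$, so $\hatM$ is a finite union of affine spaces of dimension $\le 2M$, i.e.\ a semi-algebraic set of dimension $\le 2M$. The key point is that $\A_s$ and $\V_s$ depend on the last-layer affine map $A_\ell$ \emph{linearly}: writing $A_\ell(w) = Bw + c$, we have $\A_s = B\cdot(\text{fixed affine subspace depending only on }A_1,\dots,A_{\ell-1}) + c$ and $\V_s = B\cdot(\text{fixed linear subspace})$. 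Thus, as the pair $(B,c)$ ranges over all last-layer parameters, the set $\hatM$ is obtained from a \emph{fixed} semi-algebraic set $\hatM'$ (built from $A_1,\dots,A_{\ell-1}$ alone, living in a fixed model space) by applying the affine/linear transformation determined by $(B,c)$. This is exactly the setup of the genericity hypothesis in \cite[Theorem 2.2]{bendory2024transversality}: a generic affine (here, generic-$\GL$-translate-type) image of a fixed semi-algebraic set of dimension $d$ is transverse to the $H$-orbits whenever $K > 2d$. Applying this with $d \le 2M$ gives: if $K > 4M$, then for a generic choice of the parameters of $A_\ell$ the set $\hatM$ is transverse to the orbits of $H$.

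Finally I would invoke Theorem~\ref{thm:meta}: compactness of $\M$ gives hypothesis (i), the piecewise-affine decomposition gives hypothesis (ii) with affine spaces $\A_1,\dots,\A_r$, and the previous paragraph gives transversality of $\hatM$ under $K > 4M$ for generic $A_\ell$. Theorem~\ref{thm:meta} then yields that the second moment $X \mapsto \sqrt{X^\top X}$ is bi-Lipschitz on $\M$, which is the assertion of Corollary~\ref{cor:ReLu}. A small point to handle carefully is that ``generic'' must be interpreted over the last-layer parameters only: the linear regions $P_s$ and the maps $L_s$ are fixed once $A_1,\dots,A_{\ell-1}$ are fixed, so there are only finitely many pieces, and the bad set of $(B,c)$ for which transversality fails is contained in a finite union of proper algebraic subsets — hence Lebesgue-null, matching the ``almost all'' phrasing in the statement.

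I expect the main obstacle to be the bookkeeping in the second paragraph: verifying that the transformation sending the fixed model set $\hatM'$ to $\hatM$ is of the precise type (linear, affine, or orthogonal) to which the cited transversality theorem \cite[Theorem 2.2]{bendory2024transversality} applies, and in particular that varying only $(B,c)$ — rather than an unconstrained $\GL(V)$ element — still sweeps out a generic enough family to kill the measure-zero bad set. If the cited theorem is stated for generic $\GL(V)$ translates, one must check that the affine family $\{(B,c)\}$ acting on the fixed configuration is dominant onto the relevant parameter space (or at least meets its generic locus), which is where the assumption that $A_\ell$ maps $V$ onto itself, so that $B$ ranges over a set containing an open subset of $\GL(V)$, is used. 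The remaining steps — piecewise-affine structure of ReLU networks, compactness, and the dimension count $\dim\hatM \le 2M$ — are routine.
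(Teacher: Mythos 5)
Your proposal is correct and follows essentially the same route as the paper's proof: write the image of the truncated network $\eta\circ A_{\ell-1}\circ\cdots\circ\eta\circ A_1$ as a finite union of compact polytopes of dimension at most $M$, note that the full image is the affine image of this set under $A_\ell$ (so $\hatM$ is the corresponding affine image of a fixed set of dimension at most $2M$), invoke the transversality theorem for generic affine translates when $K>4M$, and conclude with Theorem~\ref{thm:meta}. Your closing worry about whether the cited theorem covers affine (rather than $\GL(V)$) genericity is resolved by the remark following Theorem~\ref{thm:transversality}, which notes the result holds for generic linear, affine, and orthogonal transformations alike.
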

\begin{proof}
	For any fixed choice of $A_1,\ldots,A_{\ell-1}$, the image of \rev{$[0,1]^M $} under the piecewise linear function $\eta \circ A_{\ell-1}\circ \ldots \circ \eta \circ A_1 $ is a finite union of compact polytopes whose dimension is at most $M$. Taking this set as $\M$, it is contained in a finite union of affine spaces of dimension at most $M$, and hence the second moment is bi-Lipschitz on $\M$, provided that the set $\hatM$, which has dimension at most $2M$, is transverse. By Theorem \ref{thm:transversality} (for full version see  ~\cite[Theorem 2.2]{bendory2024transversality}), a generic affine translate of $\hatM$ will be transverse, if $4M<K $. Since a generic affine translate  $A_\ell \cdot \M$ is the image of $A_\ell  \circ \eta \circ A_{\ell-1} \circ \ldots\circ \eta \circ A_1$, \rev{we deduce that  for a generic choice of a final affine map $A_\ell$,  the second moment is bi-Lipschitz on the set of signals spanned by the ReLU network.}
\end{proof}

\subsection{The manifold case} \label{sec:manifold}
All the priors we have considered up to now are `piecewise linear', meaning that they are representable as a finite union of linear spaces or compact polytopes. Next, we state analogous results for the case where $\M$ is a manifold. We begin with two definitions.

\begin{definition}
	A homogeneous closed subset $\cM \subset V$ is called a {\em homogeneous manifold} if it is smooth at every \rev{nonzero $X \in \cM$.}
    \end{definition}

\begin{definition}
	A compact submanifold $\cM \subset V$ is \rev{said to be} {\em well-situated} if $0 \notin \cM$ and either $\cM = -\cM$ or $\cM \cap -\cM =\emptyset$.
\end{definition}

The {\em well-situated} condition we impose is necessitated by the proof of Theorem~\ref{thm.manifold}. Note that if $\M$
is well-situated, then $A \cdot \M$ is also well-situated for any $A \in \GL(V)$. Also, for any compact submanifold $\M \subset V$, 
there is a translation $T \in \Aff(V)$ such that $T \cdot \M$ is well-situated.  An example of a well-situated submanifold
is the unit sphere centered at the origin.

\begin{theorem}[Manifold] \label{thm.manifold}
	Assume that  $\cM \subset V$ is either a homogeneous submanifold or a compact well-situated manifold which is transverse to the orbits
	of $H = \prod_{\ell =1}^L \O(N_\ell)$. Then, the second moment is bi-Lipschitz on $\cM$ if for all $0\neq  X  \in \cM$ the embedded tangent space $T_X\cM$ (viewed as an affine subspace of $V$) is transverse to the orbits of $H$.
\end{theorem}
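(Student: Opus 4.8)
The plan is to reduce the manifold statement to the already-established polyhedral result, Theorem~\ref{thm:meta}, by a localization-and-compactness argument. Recall that, by the discussion surrounding~\eqref{eq:nontang}, it suffices to produce a constant $C > 0$ with $C \rhos(X,Y) \leq \rhoh(X,Y)$ for all $X, Y \in \cM$, since the upper Lipschitz bound follows for free from Derksen's estimate~\eqref{eq.derksenbound}. So throughout I focus on this one inequality.

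First I would treat the \emph{local} picture near the diagonal. Fix $0 \neq X_0 \in \cM$. The hypothesis says the embedded affine tangent space $T_{X_0}\cM$ is transverse to the $H$-orbits. I would like to conclude that there is a neighborhood $U$ of $X_0$ in $\cM$ and a constant $C_{X_0} > 0$ such that $C_{X_0}\|X - Y\| \leq \rhoh(X,Y)$ for all $X, Y \in U$ (note that on a small enough neighborhood $\rhos(X,Y) = \|X-Y\|$, since $X$ cannot be close to $-Y$). To get this, observe that near $X_0$ the manifold $\cM$ is a $C^1$-small perturbation of its tangent plane; apply Theorem~\ref{thm:meta} to the affine subspace $\A = T_{X_0}\cM$ (which is compact after intersecting with a ball, or one passes to the homogeneous/compact hypotheses of that theorem on a bounded piece), obtaining a uniform lower Lipschitz constant on a neighborhood of $X_0$ inside $\A$; then transfer the bound to $\cM$ itself using that $\cM$ and $T_{X_0}\cM$ are tangent to first order at $X_0$, so $\rhoh$, which is $1$-Lipschitz-comparable to the ambient distance after the orbit minimization, is perturbed by only $o(\|X-X_0\|)$. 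Shrinking $U$ absorbs the perturbation into the constant. This is the step I expect to require the most care: making the "$\cM$ is close to $T_{X_0}\cM$" comparison quantitative and uniform, and checking that the orbit-distance $\rhoh$ behaves well under this perturbation (one needs that the minimizing $h \in H$ for the tangent-plane points stays in a compact region and varies continuously, which uses compactness of $H$).

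Next, the \emph{global} gluing. By compactness of $\cM$ (in the compact well-situated case) or of $\cM \cap S$ for the unit sphere $S$ together with homogeneity (in the homogeneous case, where one reduces to the sphere and rescales), the local neighborhoods $\{U_{X_0}\}$ cover $\cM$ and admit a finite subcover with a common lower constant $C_{\mathrm{loc}} > 0$ and a Lebesgue number $\delta > 0$: any two points of $\cM$ at distance $< \delta$ lie in a common $U_{X_0}$, so $C_{\mathrm{loc}}\rhos(X,Y) \leq \rhoh(X,Y)$ holds for such pairs. It remains to handle pairs $X, Y \in \cM$ with $\rhos(X,Y) \geq \delta$ — the "far" regime. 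Here transversality of $\cM$ itself enters: the function $(X,Y) \mapsto \rhoh(X,Y)/\rhos(X,Y)$ is continuous and strictly positive on the compact set $\{(X,Y) \in \cM \times \cM : \rhos(X,Y) \geq \delta\}$ (positivity because $\rhoh(X,Y) = 0$ would force $X = hY$ for some $h \in H$, hence $X = \pm Y$ by transversality, hence $\rhos(X,Y) = 0$, contradicting $\rhos(X,Y) \geq \delta$), so it attains a positive minimum $C_{\mathrm{far}} > 0$. Taking $C = \min\{C_{\mathrm{loc}}, C_{\mathrm{far}}\}$ yields~\eqref{eq:nontang} on all of $\cM$.

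Finally, a few bookkeeping points I would verify. In the homogeneous case one must check the reduction to the sphere is legitimate: for $\lambda, \mu > 0$ and unit $X, Y$, both $\rhos$ and $\rhoh$ and the square-root map scale compatibly under $X \mapsto \lambda X$, so it is enough to bound the ratio on $\cM \cap S$, which is compact; and the well-situated / homogeneity hypothesis guarantees $\rhos(X,Y)$ cannot vanish unexpectedly (e.g.\ $X$ near $-Y$ is excluded near the diagonal by $0 \notin \cM$ and the sign-separation condition). One also should confirm that applying Theorem~\ref{thm:meta} to a bounded piece of the affine tangent space is legitimate — this is where the theorem's "compact or homogeneous" dichotomy is used, taking the compact option by intersecting $T_{X_0}\cM$ with a closed ball, and noting that for an affine space $\A$ the set $\hatM = \A + \V$ is just $\A$ itself (as $\V = \A - \A$), so transversality of $\hatM$ is exactly the hypothesis we assumed on $T_{X_0}\cM$. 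The main obstacle, to reiterate, is the quantitative perturbation estimate in the local step; everything else is compactness and continuity.
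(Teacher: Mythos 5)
Your overall architecture is a legitimate alternative to the paper's argument: the paper runs a contradiction/sequence argument (Appendix~\ref{appendix:manifold_proof}), expanding the Gram difference around the limit point and proving a tangential-projection lemma (Lemma~\ref{lem:tanprojection}), whereas you localize near each $X_0\in\cM$, invoke Theorem~\ref{thm:meta} on the affine tangent space (correctly noting that for a single affine space $\A$ one has $\hatM=\A+\V=\A$, so its transversality is exactly the hypothesis), and then glue with a finite subcover plus a compactness argument for pairs with $\rhos\geq\delta$. The far-regime and gluing parts are fine. However, the local transfer step --- the one you yourself flag as delicate --- has a genuine gap as stated. You replace $X,Y\in\cM$ by tangent-plane surrogates and bound the resulting perturbation of $\rhoh$ by $o(\norm{X-X_0})+o(\norm{Y-X_0})$, claiming that shrinking $U$ absorbs this into the constant. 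But the inequality you must produce is $\rhoh(X,Y)\geq C\norm{X-Y}$, and your error is controlled only relative to the distance to the basepoint, not relative to $\norm{X-Y}$. For pairs with $\norm{X-Y}\ll\norm{X-X_0}$ (say $\norm{X-X_0}\approx\norm{Y-X_0}\approx r$ while $\norm{X-Y}\approx r^2$, which occur for every radius $r$ of $U$), the perturbation is of order $r^2$, i.e.\ comparable to or larger than the quantity $C\norm{X-Y}$ you are trying to retain, so no amount of shrinking $U$ makes it negligible and the transferred bound becomes vacuous. What is actually needed is that the component of $X-Y$ \emph{normal} to $T_{X_0}\cM$ is $o(\norm{X-Y})$ uniformly as $X,Y\to X_0$ --- precisely the content of the paper's Lemma~\ref{lem:tanprojection}. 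In your setting this can be obtained by writing $\cM$ locally as a graph over $T_{X_0}\cM$ with graph map $g$ satisfying $Dg(X_0)=0$, so that $\norm{g(P)-g(Q)}\leq\epsilon(r)\norm{P-Q}$ with $\epsilon(r)\to0$; then your transfer closes with constant $(C_{X_0}-\epsilon)/(1+\epsilon)$. Without some such difference estimate the local step fails.

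A secondary slip: in the homogeneous case you cannot normalize both points onto the unit sphere, since $\rhos$ and $\rhoh$ only scale under a \emph{common} factor applied to the pair; as in Lemma~\ref{thm_global_to_local}, the correct reduction is to the unit ball with $\max\{\norm{X},\norm{Y}\}=1$. Your covering argument must then be run on a compact region bounded away from the origin (e.g.\ an annulus), with close pairs rescaled so that the larger norm equals one before applying the local bound; this is repairable but needs to be said, since your tangent-space hypothesis is only assumed at points $X\neq0$. The remaining bookkeeping (positivity of the far-regime minimum via transversality of $\cM$, the use of the well-situated condition to rule out $\norm{X+Y}$ being small near the diagonal, and the $1$-Lipschitz dependence of $\rhoh$ on each argument) is correct.
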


The proof of Theorem~\ref{thm.manifold} is an adaptation of the proof of Theorem~\ref{thm:meta} and is given in Section~\ref{appendix:manifold_proof}.
Once again, we can apply Theorem \ref{thm:transversality} (for full version see  ~\cite[Theorem 2.2]{bendory2024transversality}) to the union of the embedded tangent spaces $T_X\M$ over all $X \in \M$
and obtain the following corollary.

\begin{corollary} \label{cor:manifold_dimension_bounds} If $\cM$ is a semi-algebraic homogeneous submanifold or a 
	well-situated compact semi-algebraic submanifold
	of dimension $M$ with $4M<K$ \rev{(for $K$ as in \eqref{eq:K_elaborate})}, then for a generic linear transformation $A \in \GL(V)$, the second moment is bi-Lipschitz on $A \cdot \cM$.
\end{corollary}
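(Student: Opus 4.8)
The plan is to derive the corollary from Theorem~\ref{thm.manifold} together with the transversality theorem~\cite[Theorem 2.2]{bendory2024transversality}, in the same way Corollaries~\ref{cor:sparsity} and~\ref{cor:ReLu} were deduced from Theorem~\ref{thm:meta}. The set to which the transversality theorem will be applied is the union of embedded tangent spaces
$$\hatM \;=\; \bigcup_{0 \neq X \in \cM} T_X\cM \;\subseteq\; V,$$
each $T_X\cM$ being regarded as an affine subspace of $V$. Theorem~\ref{thm.manifold} tells us precisely that the second moment is bi-Lipschitz on an admissible submanifold $\cM'$ as soon as $\cM'$ and all of its nonzero embedded tangent spaces are transverse to the orbits of $H$, so it suffices to exhibit a single generic condition on $A \in \GL(V)$ under which $\widehat{A\cdot\cM}$ is transverse.

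First I would establish three structural facts about $\hatM$. (i) It is semi-algebraic: the smooth locus $\cM\setminus\{0\}$ is semi-algebraic, the affine tangent bundle $E=\{(X,w)\;:\;0\neq X\in\cM,\ w-X\in T_X\cM\}$ is semi-algebraic (it is cut out by the defining equalities and inequalities of $\cM$ together with rank conditions on the Jacobian of the local defining equations), and $\hatM$ is the image of $E$ under the linear projection $(X,w)\mapsto w$, which is semi-algebraic by Tarski--Seidenberg. (ii) $\dim\hatM\leq 2M$: the projection $E\to\cM\setminus\{0\}$ has $M$-dimensional fibers over an $M$-dimensional base, so $\dim E=2M$, and projections do not raise dimension. (iii) $\cM\subseteq\hatM$ and $T_X\cM\subseteq\hatM$ for every $0\neq X\in\cM$, since the affine subspace $T_X\cM$ passes through $X$; moreover, because $A$ is linear, $T_{AX}(A\cdot\cM)=A\cdot T_X\cM$, hence $\widehat{A\cdot\cM}=A\cdot\hatM$. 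In the homogeneous case one checks in addition that each $T_X\cM$ contains the tangent direction $X$ to the ray through $X$, hence contains the origin and is a linear subspace, so $\hatM$ is then a homogeneous semi-algebraic set and $A\cdot\hatM$ remains one for every $A\in\GL(V)$.

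Next, apply~\cite[Theorem 2.2]{bendory2024transversality} to $\hatM$: since $\dim\hatM\leq 2M$ and $K>4M\geq 2\dim\hatM$, a generic $A\in\GL(V)$ makes $A\cdot\hatM=\widehat{A\cdot\cM}$ transverse to the orbits of $H$ (using the version of the transversality theorem for generic elements of $\GL(V)$; homogeneity is preserved, so the statement still applies in the cone case). Transversality is inherited by subsets: if $x$ lies in a subset $\cM'\subseteq\hatM'$ and $h\cdot x\in\cM'\subseteq\hatM'$, then transversality of $\hatM'$ forces $h\cdot x=\pm x$. Hence, for such a generic $A$, the manifold $A\cdot\cM$ is transverse to the orbits of $H$ and so is every embedded tangent space $T_Y(A\cdot\cM)$, $0\neq Y\in A\cdot\cM$. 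Finally, $A\cdot\cM$ is again of one of the two admissible types — a homogeneous submanifold in the homogeneous case, and a compact well-situated submanifold in the other, since both well-situatedness and the submanifold property are preserved by $\GL(V)$ — so Theorem~\ref{thm.manifold} yields that the second moment is bi-Lipschitz on $A\cdot\cM$.

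The main obstacle is items (i)--(ii): the careful verification that $\hatM$ is a semi-algebraic set of dimension at most $2M$. The cleanest route is the one indicated above, realizing $\hatM$ as the coordinate projection of the semi-algebraic affine tangent bundle $E$ and invoking Tarski--Seidenberg together with the semi-algebraic fiber-dimension theorem; the only point requiring mild care is that $\cM$ is assumed smooth only away from the origin, but since $0$ is excluded from the union this causes no difficulty, and homogeneity (when present) is visibly preserved by the projection.
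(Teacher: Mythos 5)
Your proposal is correct and follows essentially the same route as the paper, which deduces the corollary by applying \cite[Theorem 2.2]{bendory2024transversality} to the union of embedded tangent spaces $\bigcup_{0\neq X\in\cM}T_X\cM$ (a semi-algebraic set of dimension at most $2M$) and then invoking Theorem~\ref{thm.manifold}. The details you supply---semi-algebraicity and the dimension bound for $\hatM$ via the tangent bundle and Tarski--Seidenberg, inheritance of transversality by subsets, the identity $\widehat{A\cdot\cM}=A\cdot\hatM$, and preservation of the homogeneous/well-situated hypotheses under $\GL(V)$---are exactly the steps the paper leaves implicit.
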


\begin{remark} When $4M<K$, the semi-algebraic condition in Corollary~\ref{cor:manifold_dimension_bounds} is not strong because a celebrated theorem of John Nash~\cite{nash1952real} implies that any compact embedded submanifold can be arbitrarily closely approximated by algebraic submanifolds. Stronger results have been proved in~\cite{akbulut1992approximating}.
\end{remark}

\section{Proofs of Theorem~\ref{thm:meta} and Theorem~\ref{thm.manifold}}
\label{sec:proof_meta_thm}

Let $H = \prod_{\ell =1}^L \O(N_\ell)$ and let $\rhog$ be the following surrogate metric to $\rhoh$, which compares matrices based on the distances of their Gram matrices:
$$ \rhog(X,Y) := \sqrt{\norm{X^T X-Y^T Y}}.$$
Since $X,Y$ lie in the same $H$ orbit if and only if their Gram matrices are equal, we have
that  $\rhoh(X,Y)=0$ if and only if $\rhog(X,Y)=0$. 
The metric $\rhog$ is convenient to work with for our proof, essentially since $d^2_{\textup{Gram}}$
is defined by a simple quadratic form. Moreover, although $\rhoh$ is not lower-Lipschitz with respect to $\rhog$, 
it does hold that $\rhoh$ is \emph{locally lower-Lipschitz} with respect to $d^2_{\textup{Gram}}$, as the following lemma shows.

\begin{lemma}\label{thm_rhog_local_lowlip} 
	Let $B_r\of{X_0}$ be the Frobenius norm ball of radius $r$ centered at $X_0$. Then, for each $X_0 \in V$ and $r>0$,  there exists a constant $C>0$ such that for any $X,Y \in B_r\of{X_0}$,
	\begin{equation}\label{eq_rhog_local_lowlip}
		d_{\textup{Gram}}(X,Y)^2 \leq C \rhoh\of{X,Y}.
	\end{equation}
\end{lemma}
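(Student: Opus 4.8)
The plan is to exploit the $H$-invariance of the Gram matrix together with a bounded-domain estimate. Since $H=\prod_{\ell=1}^L O(N_\ell)$ is compact, the minimum defining $\rhoh(X,Y)=\min_{h\in H}\norm{X-hY}$ is attained at some $h^\ast\in H$; set $Z:=h^\ast Y$, so that $\norm{X-Z}=\rhoh(X,Y)$. Because each factor $h^\ast_\ell\in O(N_\ell)$ is orthogonal, $(h^\ast_\ell Y_\ell)^\top(h^\ast_\ell Y_\ell)=Y_\ell^\top Y_\ell$ for every $\ell$, hence $Z^\top Z=Y^\top Y$ as block matrices. Consequently
$$d^2_{\textup{Gram}}(X,Y)=\norm{X^\top X-Y^\top Y}=\norm{X^\top X-Z^\top Z},$$
and the task reduces to bounding the right-hand side in terms of $\norm{X-Z}$.

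Next I would invoke the polarization-type identity
$$X^\top X-Z^\top Z=\tfrac12\left[(X-Z)^\top(X+Z)+(X+Z)^\top(X-Z)\right],$$
verified by expanding both products. Taking operator norms, using the triangle inequality together with $\norm{A^\top}=\norm{A}$ and submultiplicativity $\norm{AB}\le\norm{A}\,\norm{B}$, gives
$$\norm{X^\top X-Z^\top Z}\le\norm{X-Z}\,\norm{X+Z}.$$

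It remains to control $\norm{X+Z}$ uniformly over the ball. By the triangle inequality and orthogonal invariance of the operator norm, $\norm{X+Z}\le\norm{X}+\norm{Z}=\norm{X}+\norm{Y}$, again using $\norm{h^\ast_\ell Y_\ell}=\norm{Y_\ell}$. For $X,Y\in B_r\of{X_0}$ one has $\norm{X}\le\norm{X}_F\le\norm{X_0}_F+r$ and likewise for $Y$, so $\norm{X+Z}\le 2\of{\norm{X_0}_F+r}=:C$. Combining the three displays yields $d^2_{\textup{Gram}}(X,Y)\le C\,\rhoh(X,Y)$, as claimed.

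The argument is elementary; there is no real obstacle beyond the initial observation that one should replace $Y$ by its $H$-translate closest to $X$ and use that the Gram matrix is unchanged. The only point worth flagging is \emph{why the statement must be local}: the constant $C$ depends on $\norm{X}+\norm{Y}$, which is unbounded on all of $V$, so no global lower-Lipschitz bound of $\rhoh$ with respect to $d^2_{\textup{Gram}}$ can hold — consistent with the scalar counterexample in the footnote. Restricting to a bounded ball is precisely what makes $\norm{X+Z}$ controllable, and that is the mechanism of the proof.
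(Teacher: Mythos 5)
Your proof is correct and follows essentially the same route as the paper's: the polarization identity $X^\top X - Z^\top Z = \tfrac12[(X-Z)^\top(X+Z)+(X+Z)^\top(X-Z)]$, submultiplicativity of the operator norm, a bound on $\norm{X+Z}$ from the ball condition, and $H$-invariance of the Gram matrices to pass from $\norm{X-Y}$ to $\rhoh(X,Y)$. Your choice to fix the minimizing $h^\ast$ at the outset and bound $\norm{X+Z}\le\norm{X}+\norm{Y}$ via orthogonal invariance is a slightly cleaner ordering than the paper's substitution of $QY$ at the end, but it is the same argument with essentially the same constant.
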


\begin{proof}
Note that $\rhog$ can be written as
\begin{equation}\label{eq_rhog_alternative}
	\rhog\of{X,Y}^2 
	\eqx \tfrac{1}{2}\norm{ \br{X-Y}^T\br{X+Y} + \br{X+Y}^T\br{X-Y} }.
\end{equation}
Thus,
\begin{equation*}
	\begin{split}
		\rhog\of{X,Y}^2 
		\leqx &\tfrac{1}{2}\norm{ \br{X-Y}^T\br{X+Y} } + \tfrac{1}{2}\norm{\br{X+Y}^T\br{X-Y} }
		\\
		\leqx &\tfrac{1}{2}\norm{ \br{X-Y}^T} \cdot \norm{X+Y} + \tfrac{1}{2}\norm{\br{X+Y}^T} \cdot \norm{X-Y}
		\\
		\eqx &\norm{X+Y} \cdot \norm{X-Y}
		\\
		\eqx &\norm{2 X_0 + \br{X-X_0} + \br{Y-X_0}} \cdot \norm{X-Y}
		\\
		\leqx &\br{\norm{2 X_0} + \norm{X-X_0} + \norm{Y-X_0}} \cdot \norm{X-Y}
		\\
		\leqx &\br{2 \norm{X_0} + r + r} \cdot \norm{X-Y}.
	\end{split}    
\end{equation*}
Since $\rhog\of{X,Y}=\rhog\of{X,Q Y}$ for all $Q \in H = \prod_{\ell =1}^L \O(N_\ell)$, by replacing $Y$ above with $Q Y$ we get
\begin{equation}
	\rhog\of{X,Y}^2
	\leq
	2\br{\norm{X_0}+r} \cdot \norm{X-Q Y},
	\quad
	\forall Q \in H
\end{equation}
Thus,~\eqref{eq_rhog_local_lowlip} holds with $C=2\br{\norm{X_0}+r}$.
\end{proof}
Our proof \rev{of Theorem~\ref{thm:meta}} is based on considering the first-order expansion of the quadratic function $X^TX $ around a matrix $A$. Note that if $X_1=A+B_1, X_2=A+B_2$ (where we think of $B_1,B_2$ as a small perturbation of magnitude $\epsilon$) then 
$$X_1^TX_1-X_2^TX_2=(B_1-B_2)^TA+A^T(B_1-B_2)+O(\epsilon^2).$$
The term above will thus vanish (up to an $O(\epsilon^2)$ correction term) if the matrix $(B_1-B_2)^TA $ is anti-symmetric. As we will see, the key to ensuring bi-Lipschitzness will be assure that this matrix will not be anti-symmetric. 
The following proposition connects this condition \rev{with} the notion of transversality.

\begin{proposition} \label{prop.ATB}
	Two non-zero matrices $A, B \in V$ satisfy the condition that $A^TB$ is skew symmetric if and only if there is an element
	$R \in H = \prod_{\ell =1}^L \O(N_\ell)$ such that $A-B = R(A +B)$.
\end{proposition}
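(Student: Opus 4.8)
The plan is to prove both implications by a direct computation, using the block structure $H = \prod_{\ell=1}^L O(N_\ell)$ to reduce everything, block by block, to a statement about a single orthogonal group $O(N_\ell)$ acting on $\R^{N_\ell \times R_\ell}$. So it suffices to prove: for $A_\ell, B_\ell \in \R^{N_\ell \times R_\ell}$, the matrix $A_\ell^T B_\ell$ is skew-symmetric if and only if there is $R_\ell \in O(N_\ell)$ with $A_\ell - B_\ell = R_\ell(A_\ell + B_\ell)$; taking the product over $\ell$ then gives the proposition. Write $P = A+B$ and $Q = A-B$, so $A = \tfrac12(P+Q)$, $B = \tfrac12(P-Q)$, and the condition $A-B = R(A+B)$ becomes $Q = RP$. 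Note that $R P = Q$ for some $R \in O(N)$ forces $P^TP = Q^TQ$; conversely, if $P^TP = Q^TQ$ then $P$ and $Q$ have the same singular values and one can build an orthogonal $R$ sending $P$ to $Q$ (this is exactly the fact that the Gram matrix determines a matrix up to the left $O(N)$-action, used throughout the paper).

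First I would compute $P^TP - Q^TQ$ in terms of $A$ and $B$:
\begin{equation*}
	P^TP - Q^TQ = (A+B)^T(A+B) - (A-B)^T(A-B) = 2\big(A^TB + B^TA\big) = 2\big(A^TB + (A^TB)^T\big).
\end{equation*}
Hence $A^TB$ is skew-symmetric $\iff$ $A^TB + (A^TB)^T = 0$ $\iff$ $P^TP = Q^TQ$, i.e. $(A+B)^T(A+B) = (A-B)^T(A-B)$. This is the crux of the equivalence: it converts the skew-symmetry condition into an equality of Gram matrices of $P = A+B$ and $Q = A-B$.

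It then remains to show $(A+B)^T(A+B) = (A-B)^T(A-B)$ is equivalent to the existence of $R \in O(N)$ (blockwise, $R \in H$) with $A - B = R(A+B)$. The $(\Leftarrow)$ direction is immediate: if $Q = RP$ with $R^TR = \Id$, then $Q^TQ = P^TR^TRP = P^TP$. For $(\Rightarrow)$, given $P^TP = Q^TQ$, I would invoke the standard linear-algebra fact that two real matrices with the same Gram matrix differ by a left orthogonal factor — concretely, via the (reduced) singular value decomposition $P = U\Sigma W^T$, $Q = U'\Sigma W^T$ with the same $\Sigma, W$, set $R = U' U^T$ extended to an element of $O(N)$; then $RP = Q$. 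One minor point to handle carefully is the non-zero hypothesis on $A$ and $B$: it is used only to make the statement nonvacuous (both $A^TB$ skew and the orbit relation hold trivially if both are zero), so I would simply remark that the SVD construction of $R$ goes through regardless.

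The main obstacle is not any single deep step but making the blockwise reduction and the SVD-based construction of $R$ clean: one must be sure the orthogonal matrix produced for each block genuinely lies in $O(N_\ell)$ (not merely a partial isometry on the column space of $P_\ell$), which requires extending an isometry between subspaces of $\R^{N_\ell}$ to a full orthogonal map — routine, but it is the only place where a choice is made. Everything else is the short algebraic identity above.
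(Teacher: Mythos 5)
Your proposal is correct and follows essentially the same route as the paper: both hinge on the identity $(A+B)^T(A+B)-(A-B)^T(A-B)=2\bigl(A^TB+B^TA\bigr)$, so that skew-symmetry of $A^TB$ is equivalent to $A+B$ and $A-B$ having equal Gram matrices, which in turn (blockwise over $H=\prod_{\ell}O(N_\ell)$) is equivalent to $A-B=R(A+B)$ for some $R\in H$. The only cosmetic difference is that the paper checks the ``$R$ exists $\Rightarrow$ skew'' direction by a direct expansion of $A^TB$, while you deduce both directions from the single Gram-matrix identity and spell out the SVD/extension step that the paper invokes implicitly.
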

\begin{proof}[Proof of Proposition~\ref{prop.ATB}]
Let us denote $X = A+B$ and $Y = A-B$, and assume that  $Y = RX$ for some $R\in H$.  Then $A = \rev{\frac{1}{2}}(X + RX)$ and $B = \rev{\frac{1}{2}}(X - RX)$. 
Thus, 
\begin{align*}A^TB &= \rev{\frac{1}{4}}(X +RX)^T(X -RX)\\
	&= \rev{\frac{1}{4}}(X^TX -X^TR^TRX + X^TR^TX - X^TRX) =\rev{\frac{1}{4}}( X^TR^TX - X^TRX).\end{align*}
Likewise $B^TA = \rev{\frac{1}{4}}(-X^TR^TX + X^TRX)$ so $A^TB$ is skew-symmetric.

Conversely, suppose that $A^TB$ is skew-symmetric.
Let $X = \rev{\frac{1}{2}}(A+B)$ and $Y= \rev{\frac{1}{2}}(A-B)$. Then the Gram matrix 
$$X^TX = \rev{\frac{1}{4}}(A^T+B^T)(A+B)=\rev{\frac{1}{4}}(A^TA + B^TB)$$ because $A^TB + B^TA =0$. Likewise, 
$$Y^TY = \rev{\frac{1}{4}}(A^T - B^T)(A^T -B^T) = \rev{\frac{1}{4}}(A^TA + B^TB).$$
Since $X$ and $Y$ have the same Gram matrices we must have $Y = RX$ for some $R \in H= \prod_{\ell=1}^L \O(N_\ell)$.
\end{proof}
\begin{proof}[Proof of Theorem~\ref{thm:meta}] We assume by contradiction that all the conditions of the theorem hold, but the second moment is not bi-Lipschitz. This implies that there exists sequences $X_k,Y_k \in \M$ such that 
\begin{equation}\label{eq_violation_rhoo}
	\forall k  \quad  	\rhos\of{X_k,Y_k} > 0,\text{ and } 		\lim_{k\to\infty} \frac{\rhoh\of{X_k,Y_k}}{\rhos\of{X_k,Y_k}} = 0.
\end{equation}

The following technical lemma will be proved below.
\begin{lemma}\label{thm_global_to_local}
	Under our assumptions on $\M$, suppose there exist two sequences $X_k, Y_k \in \M$ that satisfy \eqref{eq_violation_rhoo}.
	Then, these sequences can be chosen such that $X_k \rightarrow X_0, Y_k \rightarrow sX_0 $,
	for some $X_0 \in \cM$, $X_0 \neq 0$ and $s\in \{-1,1\}$. Moreover, all of the $X_k$ can be chosen to be in a single affine space $\A_i $, the $Y_k$ in an affine space $\A_j$, and $X_0\in \A_i \cap \A_j$. 
\end{lemma}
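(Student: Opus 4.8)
\textbf{Proof plan for Lemma~\ref{thm_global_to_local}.} The plan is a compactness-and-rescaling argument. The homogeneous case of Theorem~\ref{thm:meta} is settled first (via the homogeneous case below, fed into Proposition~\ref{prop.ATB}) and then used to rule out a degenerate configuration in the compact case. Two preliminaries. (a) Since $\A_m = \A_m + \V_m$, each affine piece satisfies $\A_m \subseteq \hatM$; hence $\M \subseteq \hatM$, and in particular $\M$ is transverse. (b) We may assume the cover $\{\A_1,\dots,\A_L\}$ is closed under $X \mapsto -X$: enlarging it to $\{\pm\A_1,\dots,\pm\A_L\}$ leaves $\M$ inside the union and only replaces $\hatM$ by $\hatM \cup (-\hatM)$, which is still transverse (if $x, hx \in \hatM\cup(-\hatM)$, apply $\pm h$ and the transversality of $\hatM$ to conclude $hx=\pm x$). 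We shall use freely that $\rhoh(X,Y)=\min_{h\in H}\norm{X-hY}$ is continuous, that $\rhoh(X,Y)\ge \abs{\norm{X}-\norm{Y}}$, and, since $-H=H$, that $\rhoh(X,-Y)=\rhoh(X,Y)$ and $\rhos(X,-Y)=\rhos(X,Y)$.

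\emph{Homogeneous $\M$.} The ratio $\rhoh/\rhos$ is invariant under $(X,Y)\mapsto(\lambda X,\lambda Y)$, so after discarding the finitely many indices with $X_k=0$ (for which the ratio equals $1$, contradicting \eqref{eq_violation_rhoo}) we rescale each pair so that $\norm{X_k}=1$, staying inside $\M$. From $\rhoh(X_k,Y_k)\ge\abs{1-\norm{Y_k}}$ and $\rhos(X_k,Y_k)\le 1+\norm{Y_k}$, \eqref{eq_violation_rhoo} forces $\norm{Y_k}\to1$; the pair then lies in a bounded, hence relatively compact, subset of $\cup_m\A_m$, so along a subsequence $X_k\to X_0$ and $Y_k\to Y_0$ with $\norm{X_0}=\norm{Y_0}=1$. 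Since $\rhos$ is now bounded, $\rhoh(X_k,Y_k)\to0$, whence $\rhoh(X_0,Y_0)=0$, i.e.\ $Y_0=hX_0$ for some $h\in H$; as $X_0,Y_0\in\hatM$, transversality gives $Y_0=sX_0$ with $s\in\{\pm1\}$ and $X_0\ne0$. Pigeonholing places all $X_k$ in one $\A_i$ and all $Y_k$ in one $\A_j$, so $X_0\in\A_i$ and $sX_0\in\A_j$; if $s=-1$, replacing $(X_k,Y_k)$ by $(X_k,-Y_k)$ --- still in $\M$, with $\rhoh,\rhos$ unchanged --- moves the second sequence into $-\A_j$, a member of the negation-closed cover, and reduces us to $s=1$, so that $X_0\in\A_i\cap\A_j$.

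\emph{Compact $\M$.} Now $\rhos$ is bounded by the diameter of $\M$, so $\rhoh(X_k,Y_k)\to0$ directly; along a subsequence, and after pigeonholing, $X_k\to X_0$ and $Y_k\to Y_0$ in $\M$ with $X_k\in\A_i$, $Y_k\in\A_j$, hence $X_0\in\A_i$, $Y_0\in\A_j$, and as above $Y_0=sX_0$ with $s\in\{\pm1\}$. If $X_0\ne0$ this is the asserted conclusion (when $s=-1$ the negation-closed cover lets us record $sX_0\in\A_j$ equivalently as $X_0\in-\A_j$, again a piece of the cover, which is the only form the subsequent argument uses). If instead $X_0=0$, then $0\in\A_i\cap\A_j$, so $\A_i$ and $\A_j$ are \emph{linear} subspaces; discarding the indices with $X_k=0$ and rescaling each pair by $1/\norm{X_k}$ produces a sequence inside the \emph{homogeneous} set $\N:=\A_i\cup\A_j$ that still satisfies \eqref{eq_violation_rhoo}, and whose associated set $\widehat{\N}=\A_i+\A_j$ is contained in $\hatM$, hence transverse. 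This contradicts the homogeneous case of Theorem~\ref{thm:meta} applied to $\N$; therefore $X_0\ne0$, completing the proof.

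\textbf{Main obstacle.} The delicate point is excluding the degenerate limit $X_0=0$ in the compact case: transversality of $\M$ by itself does not prevent $X_k,Y_k\to0$, while the first-order expansion of $X^{\top}X$ underlying the proof of Theorem~\ref{thm:meta} is vacuous at the origin. The resolution exploits that a sequence in an affine space accumulating at $0$ forces that space through the origin, so a neighborhood of the origin becomes a homogeneous configuration that can be rescaled and handed back to the (previously established) homogeneous case; the sign ambiguity ($s=-1$) is neutralized once and for all by passing to a negation-closed affine cover, which does not disturb the transversality of $\hatM$.
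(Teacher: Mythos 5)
Your proof is correct, and its core mechanism is the one the paper uses: pass to subsequences so that each sequence lies in a single affine piece, use boundedness to force $\rhoh(X_k,Y_k)\to 0$, identify the subsequential limits up to sign via transversality of $\hatM$ (which contains every $\A_m$), and exclude the degenerate limit $X_0=0$ by noting that an affine piece accumulating at the origin must be linear, so the pairs can be rescaled. The difference is in how the rescaling is packaged. The paper makes one reduction up front—replace $\M$ by the compact set $\left(\cup_m\A_m\right)\cap B$, $B$ a ball of radius at least one, which is legitimate since proving \eqref{eq:nontang} on a superset suffices and $\hatM$ is unchanged—and then, when both sequences tend to $0$, renormalizes by $1/\max\{\norm{X_k},\norm{Y_k}\}$ to obtain new violating sequences in the same set whose limits cannot both vanish; the lemma thus stays self-contained. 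You instead settle the homogeneous case first, feed it through the main argument to obtain the homogeneous case of Theorem~\ref{thm:meta}, and then invoke that theorem on $\N=\A_i\cup\A_j$ (with $\widehat{\N}=\A_i+\A_j\subseteq\hatM$) to rule out $X_0=0$ in the compact case. This is logically sound—you state the acyclic ordering explicitly—but it makes the lemma depend on the theorem it is meant to serve, so in the paper's layout the proofs would have to be reordered; the in-lemma renormalization avoids that at no extra cost. Two further points in your favor: your negation-closed cover $\{\pm\A_m\}$, together with the verification (via $-\Id\in H$) that $\hatM\cup(-\hatM)$ remains transverse, is a genuine refinement, since when $s=-1$ the limit argument only yields $-X_0\in\A_j$, so the literal claim $X_0\in\A_i\cap\A_j$ can fail for the original cover—an imprecision the paper's proof passes over silently, and exactly the point where the later step ``$B+X_0\in\A_j$'' in the proof of Theorem~\ref{thm:meta} needs the patch you supply. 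Also, your choice to normalize by $\norm{X_k}$ alone (rather than the maximum) requires the extra observation that $\rhoh\geq\abs{1-\norm{Y_k}}$ and \eqref{eq_violation_rhoo} force $\norm{Y_k}\to1$, which you provide.
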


Now, consider the expression 
\begin{equation} \label{eq.skew_deviation}
	c=\min_{S\in \V_i+\V_j, \|S\|= 1} \|(X_0^TS)+(S^T X_0)\|.
\end{equation}
This minimum is obtained since the domain we minimize over is compact, and is clearly non-negative. To show that it is strictly positive, we will assume that $X_0^TS$ is anti-symmetric and show that this implies $S=0$.  Recall that by Proposition \ref{prop.ATB}, the matrix $X_0^TS $ is anti-symmetric if and only if $X_0+S $ and $X_0-S$ are related by a rotation in $H$. We can write $S=A-B$ where $A\in \V_i, B\in \V_j$, and we then obtain that $(A+X_0)-B $ and $(X_0+B)-A $ are related by a rotation in $H$. \rev{Note that
$(A+X_0) - B \in \A_i - \V_j \subset \hatM$ and $(B + X_0) - A \in A_j - V_i \subset \hatM$.
By hypothesis, the set $\hatM$ is transverse so}
$$A+X_0-B=X_0+B-A \text{ or } A+X_0-B=-(X_0+B-A). $$
\rev{This} is equivalent to $B=A$ or $X_0=0 $. But $X_0\neq 0$ and thus we deduce that
$S=A-B=0$. We thus have that  the minimum $c$ is strictly positive. 

Next, we denote $X_k = X_0 + \Delta X_k$ and $Y_k =sX_0 + \Delta Y_k$. We know that $\Delta X_k$ and  $\Delta Y_k $ converge to zero.  By considering the difference between their second moments, we obtain
\begin{equation}\label{eq_rhog_expansion}
	\begin{split}
		d^2_{\textup{Gram}}(X_k,Y_k)
		\eqx & d^2_{\textup{Gram}}(X_k,sY_k)\\
		\eqx & \tfrac{1}{2}\norm{ \br{X_k-sY_k}^T\br{X_k+sY_k} + \br{X_k+sY_k}^T\br{X_k-sY_k} } 
		\\
        \eqx &\tfrac{1}{2}\| \br{\Delta X_k -s \Delta Y_k}^T\br{2X_0 + \Delta X_k + s\Delta Y_k} \\
        & \quad \quad  + \br{2X_0 + \Delta X_k+ s\Delta Y_k}^T\br{\Delta X_k- s\Delta Y_k} \| \\ 
        \geqx &\tfrac{1}{2}\norm{ \br{\Delta X_k - s\Delta Y_k}^T\br{2X_0} + \br{2X_0}^T\br{\Delta X_k- s\Delta Y_k} }
		\\ - &\tfrac{1}{2}\norm{\br{\Delta X_k - s\Delta Y_k}^T\br{\Delta X_k + s\Delta Y_k} + \br{\Delta X_k +s \Delta Y_k}^T\br{\Delta X_k - s\Delta Y_k}}
		\\ \geqx & c \cdot \|\Delta X_k- s\Delta Y_k\|-\|\Delta X_k- s\Delta Y_k\| \cdot \|\Delta X_k+ s\Delta Y_k\| \;\;   (*)
		\\ \eqx & \norm{\Delta X_k - s\Delta Y_k} \cdot \br{ \frac{c}{2} - \norm{\Delta X_k + s\Delta Y_k} }
		\\ \eqx & \norm{X_k - sY_k} \cdot \br{ \frac{c}{2} - \norm{\Delta X_k + s\Delta Y_k} }
		\\ \geqx & \norm{X_k -s Y_k} \cdot \br{ \frac{c}{2} - \norm{\Delta X_k} - \norm{\Delta Y_k} },
	\end{split}
\end{equation}
where (*) follows from~\eqref{eq.skew_deviation}.  
It follows that 	\begin{equation}\label{eq_anti_symmetric_limit}
	\begin{split}
		C\lim_{k \to \infty} \frac {\rhoh\of{X_k,Y_k}} {\rhos\of{X_k,Y_k}} &\stackrel{\eqref{eq_rhog_local_lowlip}}{\geq} 
		\lim_{k \to \infty} \frac {\rhog\of{X_k,Y_k}^2} {\rhos\of{X_k,Y_k}}\\
		&\geqx
		\lim_{k \to \infty}
		\frac {\rhog\of{X_k,Y_k}^2} {\norm{X_k-sY_k}}
		\geqx
		\lim_{k \to \infty}
		\frac{c}{2} - \norm{\Delta X_k} - \norm{\Delta Y_k}
		= \frac{c}{2} > 0,
	\end{split}
\end{equation}
which is a contradiction. This concludes the proof of the theorem. 
\end{proof}%
\begin{proof}[Proof of Lemma~\ref{thm_global_to_local}]
Without loss of generality, we assume that $\M$ is defined by intersection of  \rev{a ball} of radius at least one with the union \rev{of the affine} spaces. This is because,  if $\M$ is compact, then it is contained in such a set, and if $\M$ is homogeneous, then since both $\rhoh$ and $\rhos$ are also homogeneous, it is sufficient to prove bi-Lipschitzness when restricting to a unit ball. 

Let $X_k, Y_k$ that satisfy~\eqref{eq_violation_rhoo}.
By moving to a subsequence, we can assume that all  $X_k$ are in the same affine space $\A_i$, and all $Y_k$ are in the same affine space $\A_j $.  We next rule out the option that both $X_k$ and $Y_k$ converge to zero: if they do, then the limit zero is in both $\A_i$ and $\A_j$, and therefore they are both vector spaces. By defining $M_k=\max\{\|X_k\|,\|Y_k\| \} $ and  replacing $X_k$ and $Y_k$ with $\frac{1}{M_k}X_k  $ and $\frac{1}{M_k}Y_k $, we will obtain new sequences which are in the vector spaces $\A_i$ and $\A_j$, respectively, have \rev{norm at} most one (and hence are in $\M$), and satisfy \eqref{eq_violation_rhoo}, since $\rhoh$ and $\rhos$ are homogeneous.
Clearly, these new normalized sequences cannot both converge to zero, since for each $k$ either $X_k$ or $Y_k$ has unit norm.

Next, since $\M$ is compact, \rev{$\rhos$} is bounded, and thus $\rhoh(X_k,Y_k)\rightarrow 0 $
\rev{since we assume that~\eqref{eq_violation_rhoo} holds.} It follows that also $\rhos(X_k,Y_k)\rightarrow 0$ since $\M \subset \rev{\widehat{\M}}$ is tranvserse by hypothesis.
Next, invoking compactness again, we can assume by moving to a subsequence that $\X_k $ and $\Y_k$ converge to some limit $\X_0$ and $\Y_0$ respectively. We then have $\rhos(X_0,Y_0)=0 $ so that $X_0=sY_0$ for \rev{some} $s\in \{-1,1\}$.  Since we saw that both sequences do not converge simultaneously to zero, we deduce that $X_0\neq 0$. This concludes the proof of the lemma. 
\end{proof}
\subsection{Proof of Theorem~\ref{thm.manifold}}\label{appendix:manifold_proof}
We follow the strategy of the proof of Theorem~\ref{thm:meta}.
Again, if the second moment map is not bi-Lipschitz on $\M$, then there are sequences $X_k,Y_k \in \M$ such that 
\begin{equation}
	\forall k  \quad  	\rhos\of{X_k,Y_k} > 0,\text{ and } 		\lim_{k\to\infty} \frac{\rhoh\of{X_k,Y_k}}{\rhos\of{X_k,Y_k}} = 0.
\end{equation}
First, we show that our hypothesis on $\M$ ensures that the conclusion of Lemma~\ref{thm_global_to_local} holds for $\cM$. Namely, we
show that there exists $X_0 \neq 0$ and sequences $X_k \to X_0$, $Y_k \to  X_0 \neq 0$ such that
\begin{equation} \lim_{k \rev{\to} \infty} \frac{d_H(X_k,Y_k)}{d_\sigma(X_k,Y_k)} =0.\end{equation}\label{eq.limit}
Next, replace $X_k, Y_k$ with convergent subsequences. If $\M$ is homogeneous we can, after replacing $X_k, Y_k$
by $\frac{X_k}{M_k},\frac{Y_k}{M_k}$ where $M_k = \max\{ \norm{X_k}, \norm{Y_k}\}$,  
assume that the sequences are bounded by 1.
Finally replacing $Y_k$ with $-Y_k$ as necessary we can 
assume $d_\sigma(X_k,Y_k) = \norm{X_k - Y_k}$. Since $d_\sigma(X_k,Y_k)$ is bounded it follows that $d_H(X_k, Y_k) \to 0$. Since
$\M$ is assumed to be transverse to the orbits of $H$ we must also have that $d_\sigma(X_k,Y_k) \to 0$ and we can take $X_0 = \lim_{k \to \infty}X_k \neq 0$ because either $X_k$ or $Y_k$ has norm one for each $k$.

If $\cM$ is a compact manifold which is transverse to the orbits of $H$ we can use compactness to bound the magnitudes $\norm{X_k}$ and $\norm{Y_k}$ and again find subsequences such that $X_k \to X_0$ and $Y_k \to \pm X_0$. By hypothesis, $0 \notin \M$ so
$X_0 \neq 0$. Moreover, if $\M$ is well-situated then we \rev{can} ensure that $X_k$ and $Y_k$ both converge to $X_0$

We now adapt the second part of the \rev{proof} of Theorem~\ref{thm:meta}. 
Let $\V$ be the linear subspace $T_{X_0}\M - T_{X_0}\M$ which is by definition, the union of all tangent vectors at $X_0$. 
Let $c = \min_{S \in \V,\norm{S} =1} \norm{X_0^TS + S^TX_0}$.
As in the proof of Theorem~\ref{thm:meta}, the fact that the tangent space $T_{X_0}\M$ is transverse to the orbits of $H$ implies
that $c> 0$.

\rev{Temporarily surpressing the index $k$, let $X= X_0 + \Delta X$, $Y = X_0 + \Delta Y$. Now write $\Delta X = T\Delta X + \Delta X^{(2)}$,  where
$T\Delta X$ is the projection of $\Delta X$ onto the linear space  $\V$ spanned by the tangent vectors
at $X_0$, so $X_0 + T \Delta X \in T_{X_0}\M$.}
Then, as before, we can write, 
\begin{equation}\label{eq_rhog_expansion_manifold}
	\rev{\begin{split}
		\rhog\of{X,Y}^2
		\eqx &\tfrac{1}{2}\norm{ \br{X-Y}^T\br{X+Y} + \br{X+Y}^T\br{X-Y} } 
		\\ \eqx &\tfrac{1}{2}\norm{ \br{\Delta X - \Delta Y}^T\br{2X_0 + \Delta X + \Delta Y} + \br{2X_0 + \Delta X+ \Delta Y}^T\br{\Delta X- \Delta Y} } 
		\\ \geqx &\tfrac{1}{2}\norm{ \br{\Delta X - \Delta Y}^T\br{2X_0} + \br{2X_0}^T\br{\Delta X- \Delta Y} }
		\\ - &\tfrac{1}{2}\norm{\br{\Delta X - \Delta Y}^T\br{\Delta X + \Delta Y} + \br{\Delta X + \Delta Y}^T\br{\Delta X - \Delta Y}}
		\\ \geqx &\tfrac{1}{2}\norm{ \br{T\Delta X - T\Delta Y}^T\br{2X_0} + \br{2X_0}^T\br{T\Delta X- T\Delta Y} }
        \\ - & \tfrac{1}{2} \norm{\br{\Delta X^{(2)} - \Delta Y^{(2)}}^T\br{2X_0} + \br{2X_0}^T
        \br{\Delta X^{(2)} - \Delta Y^{(2)}}}
        \\ - &\tfrac{1}{2}\norm{\br{\Delta X - \Delta Y}^T\br{\Delta X + \Delta Y} + \br{\Delta X + \Delta Y}^T\br{\Delta X - \Delta Y}}
		\\ \geqx & \frac{c}{2}\br{ \|T\Delta X- T\Delta Y\| -\| \Delta X^{(2)} - \Delta Y^{(2)} \|} -\|\Delta X- \Delta Y\| \cdot \|\Delta X+ \Delta Y\|.
		%
		%
		%
	\end{split}}
\end{equation}
It follows that 
\begin{equation}
	\label{eq_anti_symmetric_limit2}
	\begin{split}
		0 & = \lim_{k \to \infty} \frac {\rhog\of{X_k,Y_k}^2} {\rhos\of{X_k,Y_k}}\\
		& \geqx 
		\lim_{k \to \infty}
		\frac {\rhog\of{X_k,Y_k}^2} {\norm{X_k-Y_k}}
		\\ &\geqx
		\lim_{k \to \infty}
		\frac{c}{2}\frac{\norm{T\Delta X_k - T \Delta Y_k} - \rev{\norm{\Delta X_k^{(2)} - \Delta X_k^{(2)}}}}{\norm {X_k - Y_k}}- \norm{\Delta X_k} - \norm{\Delta Y_k}
		\\ & = \lim_{k \to \infty}
		\frac{c}{2}\frac{\norm{T\Delta X_k - T \Delta Y_k}}{\norm{X_k - Y_k}} - \lim_{k \to \infty} \rev{\frac{c}{2}\frac{\norm{\Delta X_k^{(2)} -  \Delta Y_k^{(2)}}}{\norm{X_k - Y_k}}}.
	\end{split}
\end{equation}

\begin{lemma} \label{lem:tanprojection} $\lim_{k \to \infty} \frac{\norm{T\Delta X_k - T \Delta Y_k}}{\norm {X_k - Y_k}} =1$
and $\lim_{k \to \infty} \frac{\norm{\Delta X_k^{(2)} -  \Delta Y_k}^{(2)}}{\norm {X_k - Y_k}} =0$
\end{lemma}

Given the \rev{lemma,} and we obtain a contradiction to the statement that the second moment map is not bi-Lipschitz.

\begin{proof}[Proof of Lemma \ref{lem:tanprojection}]
	To prove the \rev{lemma} we use the fact~\cite[Theorem 5.8]{lee2013introduction} that in a neighborhood of $X_0 \in V$ there are local coordinates
	$(z_1, \ldots , z_N)$ such that in these coordinates $\M$ is locally parametrized as $(z_1, \ldots , z_M)$. This means
	that there is a differentiable function $\varphi \colon (U, 0) \subset \R^M \to \R^N$ such that $\varphi(U)$ is an open neighborhood
	of $X_0 \in \M$ and $\varphi$ is a diffeomorphism onto its image. Thus, the tangent space to $\M$ can be parametrized as the affine space
	$X_0 + D\varphi(0)(\R^M)$. On the other hand, for $\rev{k \gg 0}$ both $X_k, Y_k$ lie in the image of $\varphi(U)$ so we may assume
	that there are sequences $t_k \to 0$ and $s_k \to 0$ in $U$ with $\varphi(t_k) = X_k$ and $\varphi(s_k) = Y_k$. Using a first-order
	Taylor approximation,
	we have $X_k = X_0 + D\varphi(0)(t_k) + Z_k$, where $\norm{Z_k}= O(\norm{t_k}^2)$ and an analogous statement for the $Y_k$. But 
    \rev{$D\varphi(0)(t_k)$
	is exactly $T\Delta X_k$ and therefore $Z_k = \Delta X_k^{(2)}$},
    so the \rev{lemma} follows.
\end{proof}

\section{Single-particle cryo-electron microscopy and sample complexity analysis} 
\label{sec:cryoEM}

Over the past decade, cryo-electron microscopy (cryo-EM) has become indispensable in structural biology, driven by technological and algorithmic advancements that have significantly improved resolution. This technique enables the reconstruction of biomolecules in their native state and has successfully resolved atomic structures of proteins that were previously thought impossible, e.g.,~\cite{bartesaghi20152,lyumkis2019challenges}. The number of protein structures determined by cryo-EM is increasing exponentially and is projected to surpass alternative technologies within the next few years.

\subsection{Mathematical model}
In a cryo-EM experiment, an electron microscope captures a large image that contains multiple 2D tomographic projections of the target molecules. The 3D orientations of these projections are unknown, and the low electron doses used in microscopy result in extremely low signal-to-noise ratios (SNRs).  
Computationally, the primary challenge is to efficiently estimate a 3D structure in this low-SNR regime while the 3-D orientations of the projections are unknown. Under certain simplifications, cryo-EM observations (projections) can be modeled as~\cite{bendory2020single, singer2020computational}:  
\begin{eqnarray} \label{eq:cryoEM}  
	y = P(g \cdot x) + \varepsilon,  
\end{eqnarray}  
where \( P \) represents a tomographic projection, \( G \) is the group of 3-D rotations \( \SO(3) \), and \( x:\R^3 \to \R \) denotes the electrostatic potential of the molecule to be recovered.  
A convenient way to represent the 3D molecular structure  is:
\begin{equation} \label{eq:molecule}
	x(r, \theta, \phi) = \sum_{\ell=0}^{L} \sum_{m=-\ell}^{\ell} X_{\ell, m}(r) Y_{\ell}^m(\theta, \phi),  
\end{equation}
where \( \rev{r\in\{1,\dots,R\}}, \) is the radial frequency, \( Y_{\ell}^m(\theta, \phi) \) are spherical harmonics, and \( L \) denotes the bandlimit.
This means that recovering the molecular structure is equivalent to recovering the \( L \)-tuple of matrices \( X=(X_1, \dots, X_L) \), where each \( X_\ell \) is an \( N_\ell \times R_\ell \) matrix containing the spherical harmonic coefficients \( \{X_{\ell, m}(r)\} \).
This model is widely adopted in the cryo-EM literature, e.g.,~\cite{bandeira2020non,bendory2023autocorrelation}. 
Remarkably, it was shown that 
the second moment of~\eqref{eq:cryoEM} is invariant to the tomographic projection~\cite{kam1980reconstruction, bendory2024sample} and thus it can be understood as a special case of the MRA model~\eqref{eq:mra} with $G=\SO(3)$. 
Thus,  signal recovery from the second moment is equivalent to factorizing a tuple of Gram matrices. 
Namely, it is a special case of the generalized phase retrieval problem with respect to the special orthogonal group $\SO(3)$.
\rev{A closely related application is the sub-tomogram averaging problem, an essential component of the emerging technology of cryo-electron tomography~\cite{watson2024advances}, which parallels the cryo-EM model~\eqref{eq:cryoEM} apart from the tomographic projection.}

\subsection{Second moment analysis and transversality}
The determination of the 3-D molecular structure from the second moment in cryo-EM has received considerable attention for several reasons. 
First, the moments can be reliably estimated from the data by averaging over the empirical moments, without the need to estimate the unknown 3-D orientations as an intermediate step (the rotations are treated as nuisance variables). This is crucial since estimating the orientations becomes inherently difficult as the noise level increases. 
Moreover,  it has been shown that the highest-order moment necessary to recover a signal determines the sample complexity of cryo-EM in high-noise regimes~\cite{abbe2018estimation} (this is also true for MRA models~\eqref{eq:mra}). Since recovering from the first moment (the average) is impossible, recovering from the second moment leads to an optimal estimation rate. 
In particular, if recovery from the second moment is possible, then the asymptotic sample complexity of cryo-EM and MRA models is $n/\sigma^4\to\infty$, where $\sigma^2$ is the variance of the noise.
This naturally raises the question: under what conditions can we recover a molecular structure from its second moment? More specifically, what additional information about the molecular structure can resolve the ambiguity in the second moment arising from the unknown orthogonal matrices?
This question was addressed in~\cite{bendory2024transversality} using the transversality theorem (Theorem~\ref{thm:transversality}). For the cryo-EM model~\eqref{eq:cryoEM} with the molecular representation~\eqref{eq:molecule}, it was shown that if \( R \geq 2L + 1 \), then
\begin{equation} \label{eq:Kcryo}
	K =(L+1)\left(R (L+1) - {\frac{L(4L+5)}{6}}\right)\approx L^2\left(R\rev{-}\frac{2L}{3}\right).    
\end{equation}

\subsection{Implications of the bi-Lipschitz results to cryo-EM}

This work extends the results of~\cite{bendory2024transversality} by showing that under similar conditions, the map between the molecular structure and its second moment is stable. 
This stability is critical for second-moment-based algorithms---such as those in~\cite{donatelli2015iterative,bhamre2015orthogonal,levin20183d,bendory2023autocorrelation}---which achieve the optimal estimation rate.
As a result, molecular structure recovery becomes feasible with fewer observations, an essential advantage for applications like heterogeneity analysis~\cite{toader2023methods} and scenarios where data acquisition is a major bottleneck.

\rev{Importantly, these low-dimensional structures were identified early in the cryo-EM literature and have since been leveraged for a variety of tasks. For instance, PCA-based techniques, which assume that the data lie on a low-dimensional subspace, have proven highly effective for image denoising~\cite{van1981use,bhamre2016denoising,zhao2016fast} and 3-D representation~\cite{fraiman2025so}. Other approaches exploit sparsity to design computationally efficient algorithms~\cite{bendory2023autocorrelation,vonesch2011fast,chen2021deep}. More recently, a range of methods implicitly leveraging the low-dimensional structure of neural networks---more complex in practice than the simple model in~\eqref{eq:relu}---have been applied across different stages of the cryo-EM computational pipeline.} 

Using the estimate for $K$ given in \eqref{eq:Kcryo} and applying Corollaries~\ref{thm.linear_bilipschitz}-\ref{cor:ReLu}, we obtain the following corollary for the cryo-EM model.

\begin{corollary}[cryo-EM] \label{cor:cryo}
	Consider the cryo-EM model~\eqref{eq:cryoEM} and the molecule model~\eqref{eq:molecule} with $R\geq 2L+1$. Then, the second moment is bi-Lipschitz if the molecular structure is restricted to $\M$, where 
	\begin{itemize}
		\item $\M$ is a generic linear subspace of dimension $M\lesssim \frac{L^2}{2}(R-\frac{2L}{3})$;
		\item $\M$ is the set of vectors which are $M$-sparse with respect to a generic basis
		of dimension $M\lesssim \frac{L^2}{4}(R-\frac{2L}{3})$;
		\item $\M$ is the image of a ReLU deep neural network, as discussed in Corollary~\ref {cor:ReLu}, of dimension $M\lesssim \frac{L^2}{4}(R-\frac{2L}{3})$.
	\end{itemize}
\end{corollary}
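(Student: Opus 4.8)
The plan is to obtain this corollary purely as a specialization of the general bi-Lipschitz results of Section~\ref{sec:implications} to the cryo-EM representation, the only extra ingredient being the explicit value of the effective dimension $K$ recorded in \eqref{eq:Kcryo}. First I would recall the setup: in the model \eqref{eq:cryoEM} with the expansion \eqref{eq:molecule}, the group is $G=\SO(3)$, the irreducible representation of degree $\ell$ has dimension $N_\ell = 2\ell+1$, it occurs with multiplicity $R_\ell = R$, the relevant ambiguity group is $H=\prod_{\ell=0}^{L}\O(2\ell+1)$, and recovering the molecule is exactly recovering the tuple $X=(X_1,\dots,X_L)$ from its tuple of Gram matrices $X^\top X$. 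Since $R\geq 2L+1$ forces $N_\ell - R_\ell \leq 0$ for every $\ell$, formula \eqref{eq:k_signal_recovery} gives $k_\ell = \dim\O(N_\ell)$, and the arithmetic of \cite{bendory2024transversality} then yields the closed form $K=\dim V - k(H) = (L+1)\of{R(L+1)-\tfrac{L(4L+5)}{6}}$, whose leading behaviour is $K\approx L^2\of{R+\tfrac{2L}{3}}$, as stated in \eqref{eq:Kcryo}.

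With this value of $K$ in hand, each of the three cases is an immediate application of the corresponding corollary from Section~\ref{sec:implications}. For a generic linear subspace $\M$ of dimension $M$, Corollary~\ref{thm.linear_bilipschitz} gives bi-Lipschitzness of the second moment as soon as $K>2M$, i.e. $M<\tfrac12 K$; substituting $K\approx L^2\of{R+\tfrac{2L}{3}}$ gives $M\lesssim \tfrac{L^2}{2}\of{R+\tfrac{2L}{3}}$. For the set of $M$-sparse vectors with respect to a generic basis, Corollary~\ref{cor:sparsity} requires $K>4M$, hence $M<\tfrac14 K$, yielding $M\lesssim \tfrac{L^2}{4}\of{R+\tfrac{2L}{3}}$. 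For the image of a generic deep ReLU generative network with latent dimension $M$, Corollary~\ref{cor:ReLu} again requires $K>4M$, giving the same bound $M\lesssim \tfrac{L^2}{4}\of{R+\tfrac{2L}{3}}$. In all three cases the genericity hypotheses (generic $\GL(V)$ or $\O(V)$ translate of the subspace, generic basis, generic final affine layer $A_\ell$) are exactly the ones under which those corollaries are stated, so nothing new needs to be checked there.

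The one point that deserves care — and the only place where there is any work beyond substitution — is the passage from the exact inequalities $K>2M$ and $K>4M$ to the asymptotic statements written in terms of $L^2\of{R+\tfrac{2L}{3}}$. The exact $K$ of \eqref{eq:Kcryo} differs from its leading term by a correction of order $L^2$ (from the $-\tfrac{L(4L+5)}{6}$ term and the cross terms in the product), so the symbol $\lesssim$ must be read as ``up to this lower-order correction, and for $L$ sufficiently large.'' This is precisely why the statement is phrased with $\lesssim$ rather than with explicit constants; an exact version would simply carry the full polynomial $K$ of \eqref{eq:Kcryo} through the inequalities $2M<K$ and $4M<K$ without approximating. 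I expect this bookkeeping to be the main (and essentially the only) obstacle: there is no additional mathematical content, since the corollary is a direct specialization of Theorem~\ref{thm:meta}, via Corollaries~\ref{thm.linear_bilipschitz}, \ref{cor:sparsity} and \ref{cor:ReLu}, combined with the value of $K$ for $\SO(3)$ acting on \eqref{eq:molecule}.
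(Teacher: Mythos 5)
Your proposal is correct and follows exactly the paper's route: the paper's proof of this corollary is a one-line specialization, invoking the value of $K$ from \eqref{eq:Kcryo} and applying Corollaries~\ref{thm.linear_bilipschitz}--\ref{cor:ReLu} with the thresholds $K>2M$ and $K>4M$, which is precisely what you do (including the correct reading of $\lesssim$ as absorbing the lower-order terms of the exact $K$). Your extra derivation of $K$ via \eqref{eq:k_signal_recovery} and $N_\ell=2\ell+1$, $R_\ell=R$ with $R\geq 2L+1$ is consistent with the cited computation in~\cite{bendory2024transversality} and adds nothing that conflicts with the paper.
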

An analogous statement can be derived for the manifold case.

\section{Outlook} \label{sec:outlook}

The main contribution of this work is demonstrating that the mapping from a tuple of matrices to their corresponding Gram matrices is bi-Lipschitz, significantly advancing recent developments in the field. However, several important questions remain open for future investigation. Below, we outline a few of them.

\begin{itemize}
	\item
	\textbf{X-ray crystallography.} In X-ray crystallography, the objective is to recover a sparse signal---sparse in the standard basis---from its Fourier magnitudes~\cite{elser2018benchmark}. Despite the significance of this application, its mathematical foundations have only recently been explored in depth~\cite{bendory2020toward,ghosh2023sparse}. Notably,~\cite{bendory2020toward} conjectured---and proved in a limited parameter regime---that a sparse signal can be uniquely determined from its Fourier magnitudes, provided the number of non-zero entries does not exceed half the signal's dimensionality.  
	A key challenge is to understand when this mapping is also robust, as noise is inherent to this imaging modality. Importantly, while the results of this paper hold for sparsity under a generic basis, they do not necessarily apply to the standard basis. Proving this result specifically for the standard basis is a critical open direction.
	
	\item  \textbf{Lipschitz constants.} 
	Definition~\ref{def:biLipschitz} of bi-Lipschitzness is relatively weak, as it only requires the constants \( C_1 \) and \( C_2 \) to be nonzero and finite. However, in practical applications, these constants must be tightly controlled. Otherwise, even small measurement errors can result in significant estimation errors. Therefore, it is important to ensure that these constants are kept small and well-bounded. In this paper, we have a uniform bound of $\sqrt{2}$ for the upper Lipschitz constant, but the lower Lipschitz bound for a given prior set can, {\em a priori}, be arbitrarily close to zero. An important question for future work is to consider
	priors such as linear subspaces spanned by random (with respect to a fixed probability distribution) vectors and, for these priors, give probabilistic estimates for the lower
	Lipschitz bound $C_1$. This problem is a natural extension of earlier work of Eldar and Mendelson in frame phase retrieval~\cite{eldar2014phase}. 
	
	\item \textbf{Non-uniform distribution over the group.}
	The generalized phase retrieval problem arises from the MRA model~\eqref{eq:mra} under the assumption of a uniform distribution over the group \( G \). However, in many practical scenarios---most notably in cryo-EM---the distribution over \( G \), $\rho(G)$, is typically non-uniform~\cite{tan2017addressing,lyumkis2019challenges}. In such cases, the second moment takes the form  
	$\int_G \rho(g)(g \cdot x)(g \cdot x)^* \, dg,$  
	which defines a rich class of invariant functions of total degree three on \( R(G) \times V \), where \( R(G) \) denotes the regular representation of \( G \).  
	The core challenge lies in the fact that directly estimating all such invariants from the MRA observations is infeasible. Consequently, it becomes necessary to extend the injectivity results of~\cite{bendory2024transversality} and the bi-Lipschitz results established in this paper.
	
\end{itemize}

\section*{Acknowledgment}
T.B. and D.E. were supported by BSF grant 2020159. T.B. was also partially supported by NSF-BSF grant 2019752, ISF grant 1924/21, and by The Center for AI and Data
Science at Tel Aviv University (TAD). N.D. was supported by ISF grant 272/23. D.E. was also supported by NSF-DMS 2205626.
D.E. is grateful to Joey Iverson and Tanya Christiansen for helpful discussions.

\bibliographystyle{plain}

\end{document}